\newtheorem{lemma}{Lemma}
\newtheorem{definition}{Definition}
\newtheorem{proposition}{Proposition}
\newcommand{\half}{\mbox{$\textstyle \frac{1}{2}$}}
\newcommand{\proj}[1]{\ket{#1}\bra{#1}}
\newcommand{\identity}{\openone}
\renewcommand{\epsilon}{\varepsilon}
\begin{document}

\title{Quantitative non-classicality of mediated interactions}

\author{Ray Ganardi}
\email{r.ganardi@cent.uw.edu.pl}
\affiliation{Institute of Theoretical Physics and Astrophysics,
Faculty of Mathematics, Physics and Informatics,
University of Gda\'nsk, 80-308 Gda\'nsk, Poland}
\affiliation{Centre for Quantum Optical Technologies, Centre of New Technologies,
University of Warsaw, Banacha 2c, 02-097 Warsaw, Poland}

\author{Ekta Panwar}
\affiliation{Institute of Theoretical Physics and Astrophysics,
Faculty of Mathematics, Physics and Informatics,
University of Gda\'nsk, 80-308 Gda\'nsk, Poland}
\affiliation{International Centre for Theory of Quantum Technologies,
  University of Gda\'nsk,
  ul. Wita Stwosza 63, 80-308 Gda\'nsk, Poland}

\author{Mahasweta Pandit}
\affiliation{Institute of Theoretical Physics and Astrophysics,
Faculty of Mathematics, Physics and Informatics,
University of Gda\'nsk, 80-308 Gda\'nsk, Poland}
\affiliation{Departamento de Física,
Universidad de Murcia, Murcia E-30071, Spain}

\author{Bianka Woloncewicz}
\affiliation{Institute of Theoretical Physics and Astrophysics,
Faculty of Mathematics, Physics and Informatics,
University of Gda\'nsk, 80-308 Gda\'nsk, Poland}
\affiliation{International Centre for Theory of Quantum Technologies,
  University of Gda\'nsk,
  ul. Wita Stwosza 63, 80-308 Gda\'nsk, Poland}
\affiliation{Quantum Research Center, Technology Innovation Institute,
  Masdar City, Abu Dhabi, United Arab Emirates}

\author{Tomasz Paterek}
\affiliation{Institute of Theoretical Physics and Astrophysics,
Faculty of Mathematics, Physics and Informatics,
University of Gda\'nsk, 80-308 Gda\'nsk, Poland}
\affiliation{School of Mathematics and Physics, Xiamen University Malaysia, 43900 Sepang, Malaysia}

\begin{abstract}
In plethora of physical situations one can distinguish a mediator --- a system that couples other, non-interacting systems.
Often the mediator itself is not directly accessible to experimentation, yet it is interesting and sometimes crucial to understand if it admits non-classical properties.
An example of this sort that recently enjoys considerable attention are two quantum masses coupled via gravitational field.
It has been argued that the gain of quantum entanglement between the masses indicates non-classicality of the states of the whole tripartite system.
Here, we focus on non-classical properties of the involved interactions rather than the states.
We derive inequalities whose violation indicates non-commutativity and non-decomposability (open system generalisation of non-commuting unitaries) of interactions through the mediators.
The derivations are based on properties of general quantum formalism and make minimalistic assumptions about the studied systems, 
in particular the interactions can remain uncharacterised throughout the assessment.
Furthermore, we also present conditions that solely use correlations between the coupled systems, excluding the need to measure the mediator.
Next, we show that the amount of violation places a lower bound on suitably defined degree of non-decomposability.
This makes the methods quantitative and at the same time experiment ready.
We give applications of these techniques in two different fields: for detecting non-classicality of gravitational interaction and in bounding the Trotter error in quantum simulations.
\end{abstract}

\maketitle

Mediated interactions are very common and often the mediators are practically inaccessible to direct experimentation.
For example, consider a system of unpaired spins interacting via spin chains in solids~\cite{Sahling2015}.
The bulk measurements of magnetic properties are argued to be solely determined by the unpaired spins at the end of the chain making the chain experimentally inaccessible.
As another example, consider light modes interacting via mechanical membranes~\cite{Thompson2008}.
In this case, usually it is only the light that is being monitored.
Furthermore, fundamentally electric charges are coupled via electromagnetic field, etc.
All these scenarios share a common structure where systems $A$ and $B$ do not interact directly, but are solely coupled via a mediator system $M$, see Fig.~\ref{FIG_SETUP}.
Already at this general level, one can ask about the properties of the mediator that can be deduced from the dynamics of the coupled systems.

Along this line, methods have been proposed to witness non-classicality of the mediator's state from correlation dynamics of the coupled probes.
In particular, conditions were derived under which the gain of quantum entanglement implies that the mediator must have explored non-orthogonal states during the dynamics~\cite{Krisnanda2017,Pal2021}.
Similar ideas applied to more general models than the canonical quantum formalism were used to argue that the entanglement gain between quantum masses witnesses non-classical gravity~\cite{Bose2017,Marletto2017}, and motivated a number of concrete proposals aimed at experimental demonstration of gravity-induced entanglement, see e.g.~\cite{AlBalushi2018,Krisnanda2020,Qvarfort2020,vandeKamp2020,Rijavec2021,Kustura2022,Weiss2021,Carney2021,Pedernales2022,Marshman2022,Christodoulou2023}.
A considerable advantage of these methods is given by minimalistic assumptions they make about the physical systems involved.
They are independent of the initial state, dimensions of involved systems, or the explicit form of interactions and they also work in the presence of local environments.
Accordingly, they are applicable in a variety of fields, see e.g.~\cite{Krisnanda2018} for an example in quantum biology and~\cite{Kon2019} in solid state physics.

Here we move on from the non-classicality of states and develop tools to quantify the amount of non-classicality of mediated \emph{interactions},
while keeping minimalistic assumptions about the considered physical systems.
The notion of non-classicality we employ is given by the commutativity of interaction Hamiltonians, in the case of closed dynamics,
which generalises to decomposability of dynamical maps that also encompasses open systems.
Arguments supporting this choice are given in the next section.
A method to detect presence of such non-classicality was first presented in Ref.~\cite{Krisnanda2018B}, but it was only qualitative, i.e.\ it can only witness the presence of non-commutativity.
It is intriguing that the methods mentioned earlier, aimed at the non-classicality of states, are also at this qualitative level at the present moment.
Our main contribution here is the development of methods to quantify the amount of non-classicality.
We derive conditions which lower bound the norm of the commutator as well as suitably defined distance to decomposable maps.
These conditions are of two types and the structure of the paper reflects this division.
In the first part, we assume that the mediator is accessible to experimentation, and in the second part, the derived conditions use only data measured on the probes.
Non-trivial bounds are derived for any continuous correlation measure.
Hence, it is again expected that the methods presented are applicable in variety of fields.
We provide two examples.

\begin{figure}[!t]
	\centering
        \includegraphics[scale=0.9]{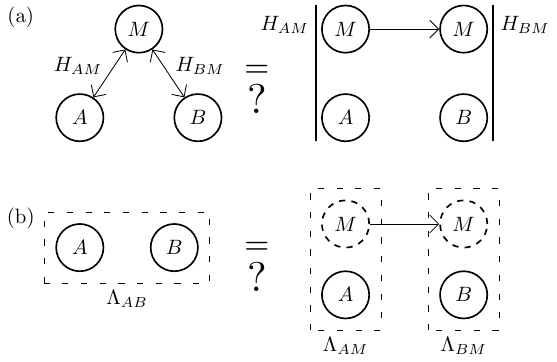}
        \caption{Mediated interactions.
	(a): Systems $A$ and $B$ are coupled via mediator $M$, i.e.\ the underlying Hamiltonian is $H_{AM} + H_{BM}$, and explicitly excludes direct coupling between the systems, i.e.\ $H_{AB}$.
	We present methods based on correlations showing that the interaction Hamiltonians do not commute, i.e.\ the tripartite dynamics cannot be understood
	as a sequence of interactions via $H_{AM}$ and then $H_{BM}$, or in reverse order.
	We also quantify this non-commutativity by providing a lower bound on a suitable norm of the commutator $[H_{AM}, H_{BM}]$.
	These notions are generalised to open systems and we emphasise that the tools make minimalistic assumptions about the whole setup. 
	(b): We extend these techniques to cases where the mediator is non-accessible.
        They are based on correlations in system $AB$ only and show that the tripartite dynamics cannot be understood as a sequence of interactions
        described by dynamical maps $\Lambda_{AM}$ and $\Lambda_{BM}$, or in reverse order.
        We also quantify this form of non-decomposability.
      }
	\label{FIG_SETUP}
\end{figure}

The first one is in the field of quantum simulations.
Suzuki--Trotter expansion is a common way to simulate arbitrary sums of local Hamiltonians, see e.g.~\cite{Lloyd1996,Poulin2015}.
It has been recently shown that the number of Trotter steps needed to obtain required simulation error scales with the spectral norm of the commutator~\cite{Childs2021}.
We link this norm to the correlations in the system, showing a quantitative relation between the complexity of simulation and the amount of correlations.

As the second example, the methods detect and measure non-commutativity of gravitational interaction coupling two quantum masses.
The idea of detecting non-classicality of gravitational interaction has been discussed very recently in Ref.~\cite{Lami2023},
but there the notion of non-classicality is different, based on the impossibility of simulating the dynamics via local operations and classical communication.
Within the quantum formalism, local operations are modelled by arbitrary local channels and classical communication by sequences of dephasing channels connecting the communicating parties. 
In the tripartite setting of two masses and gravitational field, this means the sequence: $\lambda_{AM}$, Dephasing$(M)$, $\lambda_{BM}$, Dephasing$(M)$, etc.
In principle, different dephasing maps could even be performed in different bases. 
In contradistinction, the definition we adopt in the present work deals with continuous in time dynamics and defines classicality at the level of Hamiltonians, as their commutativity. 
This implies an effective picture where a \emph{quantum} mediator is transmitted between `communicating' parties, but only one way. 
So in the tripartite setting, this means $U_{AM} U_{BM}$ or in reversed order.
For other ways of revealing that the evolution cannot be understood in terms of classical (gravitational) field see also Refs.~\cite{Howl2021,Sidajaya2022},
and for general arguments that any system capable of coupling to a quantum system must itself be quantised, see e.g.~\cite{Marletto2022}.
Our tools show that correlations between the masses exclude gravity as interaction with commuting particle-field couplings.


\section{Classicality and decomposability}

Let us start with closed systems and explain our choice of the notion of classicality and its relation to the properties of dynamical maps.
In this work, classical mediated interactions are defined by commuting Hamiltonians $H_{AM}$ and $H_{BM}$, see Fig.~\ref{FIG_SETUP}.
A high level motivation for this choice comes from the fact that in classical mechanics, all observables commute,
hence a classical mediator would have all its couplings to other systems commuting.
The commutativity can also be motivated starting with the notion of classical states as those admitting vanishing quantum discord~\cite{Modi2012}, 
or vanishing coherence in the case of a single system~\cite{Streltsov2017}, and asking for the evolution that preserves this form of classicality.
The vanishing discord means that the whole tripartite state can be measured on the mediator without disturbing the total state.
Mathematically, the state has a block diagonal form and we assume that at all times there exists a single `preferred' basis of the mediator.
We show in Appendix~\ref{APP_NOTION} that such dynamics is generated if and only if the Hamiltonian has a block diagonal form too, with the same basis on the mediator.
Since we consider here systems with global Hamiltonian $H = H_{AM} + H_{BM}$, the state classicality is preserved when both $H_{AM}$ and $H_{BM}$ are block diagonal with the same basis on system $M$, i.e.\ both Hamiltonians commute $[H_{AM}, H_{BM}] = 0$.
Furthermore, for commuting non-degenerate $H_{AM}$ and $H_{BM}$, the total Hamiltonian admits only product eigenstates and out-of-time-ordered correlators vanish at all times, as shown in Appendix~\ref{APP_NOTION}.

A closely related notion is that of decomposability.
A tripartite unitary $U$ is decomposable if there exist unitaries $U_{AM}$ and $U_{BM}$ such that
\begin{equation}
U = U_{BM} U_{AM}.
\end{equation}
Intuitively, decomposable unitaries are those that can be simulated by first coupling one of the systems
to the mediator $M$, and then coupling the other.
One can think the mediator particle is being transmitted between $A$ and $B$ which are in separate laboratories making this setting similar to that in Refs.~\cite{Cubitt2003,Streltsov2012,Chuan2012,Fedrizzi2013,Vollmer2013,Peuntinger2013}.
Although the Suzuki--Trotter formula shows that any unitary can be approximated by a sequence of Trotter steps,
decomposable unitaries are special because we can implement the exact unitary with only a single Trotter step.
For its relation to the notion of locality in quantum field theory, see Ref.~\cite{arxiv_Biagio_2023}.

Clearly, for classical interactions $[H_{AM}, H_{BM}] = 0$, the unitary operator $U(t) = e^{-itH}$ is decomposable for all $t$.
But there exist unitaries that are decomposable and yet they are not generated by a classical interaction.
A concrete example is given in Appendix~\ref{APP_ONEWAY} and relies on the fact that the unitary can be written as $U = U_{BM} U_{AM}$,
but there exist no unitaries $V_{AM}$ and $V_{BM}$ which in sequence $V_{AM} V_{BM}$ would be equal to $U$.
This example already suggests that decomposability has to be augmented with commutativity of decompositions to be equivalent to the classicality of interactions,
a fact that we prove in Appendix~\ref{APP_CL_COMM}.
Therefore, the unitary generated by classical interactions is continuously decomposable, with the added property that the decomposition must commute, i.e.\ $[ U_{AM} (t), U_{BM} (t) ] = 0$ for all $t$.
Accordingly, it is irrelevant whether we define the decomposition order as $U_{BM} U_{AM}$ or $U_{AM} U_{BM}$.

Decomposability naturally extends to open systems.
In this case, the evolution is described by a map $\lambda$ giving the state of the system at time $t$, i.e.\ $\rho = \lambda(\rho_0)$.
We say that a tripartite map $\lambda$ is decomposable if there exist maps $\lambda_{AM}$ and $\lambda_{BM}$ such that
\begin{equation}
\lambda(\rho) = \lambda_{BM} \lambda_{AM} (\rho),
\label{EQ_DEC_MAPS}
\end{equation}
for every $\rho$.
In Appendix~\ref{APP_CONSISTENCY},  we discuss consistency of this definition and the one based on unitaries.
As expected, a unitary operator is decomposable if and only if the corresponding unitary map is decomposable (general maps are not required).

It is this general notion of decomposability that we will exclude and measure the degree of its exclusion in the coming sections.
A number of similar concepts has been introduced before and it is instructive to compare the decomposability with them and note where the novelty is.
So-called divisibility asks whether map $\Lambda$ can be written as $\Lambda_1 \Lambda_2$ where both $\Lambda_1$ and $\Lambda_2$ are not unitaries~\cite{Wolf2008}.
A stronger notion of cp-divisibility, studied in the context of Markovian dynamics~\cite{Lindblad1976,Gorini1976},
asks whether map $\Lambda_t$ can be written as the sequence of completely positive maps $\Lambda_t = V_{t,s} \Lambda_s$.
Interestingly, the set of cp-divisible maps is not convex~\cite{Wolf2008}.
The decomposability we study here has a specific multipartite structure that was considered only in~\cite{Streltsov2015,Krisnanda2018B}, that is clearly significant from physics perspective.


\section{Accessible mediator}

We first present methods that utilise correlations measured on all three subsystems, and devote the next section to eliminating measurements on the mediator.
The basic idea is that correlations between subsystem $A$ and subsystems $MB$ together should be bounded in the case of decomposable dynamics
because they are effectively established via a process where mediator is being transmitted from $A$ to $B$ only \emph{once}.
It is therefore expected that the correlations are bounded by the `correlation capacity' of the mediator, i.e.\ maximal correlations to the mediator alone.
Such inequalities for distance-based correlation measures have been derived in Ref.~\cite{Krisnanda2018B} and could also be obtained by manipulating the results of Refs.~\cite{Streltsov2015,Krisnanda2017}.
Our contribution in this section is a generalisation to any continuous correlation measure and then quantification of non-decomposability based on the amount of violation of the derived criterion.


\subsection{Detecting non-decomposability}

Let us take a correlation quantifier $Q$ that is monotonic under local operations.
In Appendix~\ref{APP_ACCESSIBLE} we show that the bound in terms of correlation capacity holds when we additionally assume the initial state is of the form $\rho_0 = \rho_{AM} \otimes \rho_B$.
For such an initial state, the correlations generated by a decomposable map $\lambda$ admit
\begin{equation}
Q_{A:MB}(\lambda(\rho_0)) \le \sup_{\sigma_{AM}} Q_{A:M}(\sigma_{AM}),
\label{EQ_CAPACITY}
\end{equation}
where the bound is derived for any correlation measure $Q$ that is monotonic under local processing.
Here $\sigma_{AM}$ ranges over all possible joint states of $AM$.
If the initial state is fully product $\rho_0 = \rho_A \otimes \rho_M \otimes \rho_B$, the roles of $A$ and $B$ could be exchanged giving rise to another inequality and the experimenter should choose the one that is violated to detect non-decomposability.
This bound is already non-trivial as we now demonstrate by showing that the maximally entangling map cannot be decomposable.
Consider the initial product state $\ket{000}$ and assume systems $A$ and $B$ are of higher dimension than the mediator, i.e.\ $d_A = d_B > d_M$.
As an exemplary entanglement measure, take the relative entropy of entanglement, $E$.
It is known that its maximum depends on the dimension of the smaller Hilbert space, i.e.\ $\sup_{\sigma_{AM}} E_{A:M}(\sigma_{AM}) = \log d_M$.
According to Eq.~(\ref{EQ_CAPACITY}), any decomposable evolution cannot produce more entanglement than $\log d_M$.
This holds for entanglement $E_{A:MB}$ as well as for $E_{A:B}$ due to the monotonicity of relative entropy under partial trace.
Since dimensions of $A$ and $B$ are larger than the dimension of the mediator, maximally entangled state between $AB$ cannot be produced by any decomposable map.

Of course we are interested in extending Eq.~(\ref{EQ_CAPACITY}) to arbitrary initial state and in this way make the method independent of it.
To achieve this aim, we use continuity arguments.
Many correlation measures, including relative entropy based quantifiers~\cite{Donald1999}, all distance-based measures~\cite{Modi2010} or convex roof extensions of asymptotically continuous functions~\cite{SynakRadtke2006},
admit a version of continuity where there exists an invertible, monotonically non-decreasing function $g$, such that $|Q(x) - Q(y)| \le g(d(x,y))$,
where $d$ is a contractive distance and $\lim_{s \to 0} g(s) = 0$.
This is a refinement of the notion of uniform continuity, where we can bound how much the function varies when we perturb the input.
A notable example is logarithmic negativity~\cite{Plenio_2005} that is not asymptotically continuous, yet fulfills this notion of continuity.
For simplicity, we shall call such functions gd-continuous.
We prove in Appendix~\ref{APP_ACCESSIBLE} that correlation quantifiers which are gd-continuous are bounded in decomposable dynamics as follows:
\begin{equation}
Q_{A:MB}(\lambda(\rho_0)) \le \sup_{\sigma_{AM}} Q_{A:M}(\sigma_{AM}) + I_{AM:B}(\rho_0),
\label{EQ_ABM}
\end{equation}
where $I_{AM:B}(\rho) = \inf_{\sigma_{AM}\otimes \sigma_B} g(d(\rho, \sigma_{AM}\otimes \sigma_B))$ is a measure of total correlations in the state $\rho$ across the partition $AM:B$.
Indeed, from the properties of $g$ and $d$, it is easy to verify that this quantity is monotonic under local operations and it is zero if and only if $\rho$ is a product state across $AM:B$ partition.
Again, an independent inequality is obtained by exchanging $A$ and $B$.

This bound is also non-trivial and its violation has been demonstrated in Ref.~\cite{Ganardi2022}, which focused on negativity as a concrete correlation (entanglement) measure.
The system under consideration involved two cavity modes $A$ and $B$ coupled via two-level atom $M$.
This scenario is particularly well-suited to demonstrate the violation because the dimension of the mediator is as small as it can be whereas the dimensions of the probes are in principle unbounded.


\subsection{Measuring non-decomposability}

Having established witnesses of non-decomposability, we now argue that the amount of violation of Eq.~(\ref{EQ_ABM}) quantifies the non-decomposability.
As a measure of non-decomposability we propose a minimal operator distance from an arbitrary map $\Lambda$ to the set of decomposable maps, that we denote as $\texttt{DEC}$:
\begin{equation}
\mathrm{ND}(\Lambda) = \inf_{\lambda \in \texttt{DEC}} D (\Lambda, \lambda).
\end{equation}
We shall refer to this quantity as the `degree of non-decomposability'.
The operator distance $D$ in its definition could be chosen as the one induced by the distance on states
\begin{equation}
D (\Lambda_1, \Lambda_2) = \sup_\sigma d(\Lambda_1(\sigma), \Lambda_2(\sigma)),
\end{equation}
where $\Lambda_1$ and $\Lambda_2$ are arbitrary maps and $\sigma$ is any state from the domain of the map.
In Appendix~\ref{APP_ACCESSIBLE} we demonstrate that violation of Eq.~(\ref{EQ_ABM}) lower bounds the degree of non-decomposability as follows
\begin{eqnarray}
  \mathrm{ND}(\Lambda) &\ge& g^{-1} ( Q_{A:MB}(\Lambda(\rho_0)) - B(\rho_0) ),
\label{EQ_VIOL_AMB}
\end{eqnarray}
where $B(\rho_0)$ is the right-hand side of Eq.~(\ref{EQ_ABM}).
Accordingly, any violation of the decomposability criterion in terms of correlations sets a non-trivial lower bound on the distance between the dynamical map and the set of decomposable maps.


\subsection{Quantum simulations}
\label{SEC_TROTTER}

As the first application of the introduced measure, suppose we would like to simulate the dynamics generated by the Hamiltonian $H = H_{AM} + H_{BM}$. 
(In fact this analysis can be generalized to any 2-local Hamiltonian).
Quantum simulators implement dynamics close to the desired one by truncating the Suzuki--Trotter formula to $r$ Trotter steps
\begin{equation}
e^{-itH} \approx \left( e^{-i \frac{t}{r} H_{AM}} e^{-i \frac{t}{r} H_{BM}} \right)^r.
\end{equation}
The error of this approximation can be quantified by the spectral norm (the largest singular value)
\begin{equation}
\norm{e^{-itH} - \left( e^{-i \frac{t}{r} H_{AM}} e^{-i \frac{t}{r} H_{BM}} \right)^r}_{\infty}
\end{equation}
and it was shown in Ref.~\cite{Childs2021} that in order to make this error smaller than $\epsilon$, the number of Trotter steps has to scale with the norm of the commutator
\begin{equation}
r = O\left( \frac{t^2}{\epsilon} || \, [H_{AM}, H_{BM}] \, ||_{\infty} \right).
\label{EQ_R}
\end{equation}

Our aim is to provide a lower bound on the commutator norm in terms of correlations, and in this way bound the number of required Trotter steps.
Recall after Ref.~\cite{Childs2021} that for a single Trotter step we have
\begin{eqnarray*}
|| U - U_{AM} U_{BM}||_{\infty}
\le \frac{t^2}{2}  || \, [H_{AM}, H_{BM}] \, ||_{\infty},
\end{eqnarray*}
where $U = e^{-itH}$ and, e.g., $U_{AM} = e^{-i t H_{AM}}$.
We need to link our methods to the spectral norm.
For finite-dimensional systems, all metrics generate the same topology~\cite{Rudin1991}, i.e.\ for any two distances $d_1$ and $d_2$ there exists a constant $C$ such that
\begin{eqnarray}
\frac{1}{C} \, d_2(\rho,\sigma) \le d_1 (\rho, \sigma) \le C \, d_2(\rho, \sigma).
\end{eqnarray}
In particular, there exists a constant that relates any distance to the trace distance $d_{\mathrm{tr}}(\rho,\sigma) = \frac{1}{2} || \rho - \sigma ||_1$.
Therefore, if a correlation quantifier on finite dimensional systems is gd-continuous with respect to the trace distance, it is also gd-continuous with respect to any other distance $d$.
Furthermore, since the trace distance is contractive, Eq.~(\ref{EQ_ABM}) holds for any distance on finite-dimensional systems, at the cost of constants in function $g$.
Accordingly, let us consider the distance induced by the spectral norm $d_{\infty}(\rho, \sigma) = ||\rho - \sigma ||_{\infty}$.
We call the corresponding operator distance $D_{\infty}(\Lambda_1, \Lambda_2)$, and the degree of non-decomposability $\mathrm{ND}_{\infty}(\Lambda)$.
For the connection to the Trotter error, we note the following
\begin{equation}
\mathrm{ND}_{\infty}(U) \le D_{\infty}(U, U_{AM} U_{BM}) \le 2 || U - U_{AM} U_{BM} ||_{\infty},
\label{EQ_ND_INF}
\end{equation}
where the first inequality follows from the fact that $\mathrm{ND}_{\infty}(U)$ is the shortest distance to the set of decomposable maps and $U_{AM} U_{BM}$ is a particular decomposable map.
The second inequality is proven in Appendix~\ref{APP_INFTY}.
Combining the two inequalities, we get $\mathrm{ND}_{\infty}(U) \leq t^2 \norm{ [H_{AM}, H_{BM}] }_{\infty}$.
A concrete example relating the mutual information in a state to the number of Trotter steps is provided in Appendix~\ref{APP_I_TROTTER}.

We have therefore shown a direct link between correlations in the system and the number of Trotter steps one needs to keep the simulation error small.
The amount of violation of Eq.~(\ref{EQ_ABM}) lower bounds the degree of non-decomposability and hence spectral norm of the commutator and accordingly sets the number of required Trotter steps.
Conversely, if it is possible to simulate $U$ with $r$ Trotter steps to precision $\epsilon$, Eq.~(\ref{EQ_R}) shows that commutator norm is bounded 
and consequently Eq.~(\ref{EQ_ND_INF}) implies that correlations $Q_{A:MB}$ admit an upper bound.


\section{Inaccessible mediator}

An interesting opportunity arises where the non-classicality of evolution through mediator could be witnessed without measuring the mediator.
Here we show that this is indeed possible.
We start by introducing the necessary concepts and the related mathematical tools, and then present witnesses of non-decomposable evolution based on measurements on $AB$ only.
Finally, we establish measures of non-decomposability together with their experimentally friendly lower bounds.


\subsection{Marginal maps}

In order to detect non-classicality of interactions solely through the correlations between the coupled objects, we need the notion of `marginals' of decomposable maps.
We propose to introduce it via a related concept of dilation.
A dilation of a map $\Lambda: X \to X$ is an ancillary state $\sigma_R$ and a map $\tilde \Lambda: XR \to XR$ acting on the system and ancilla, such that
\begin{equation}
\Lambda(\rho) = \Tr_R(\tilde \Lambda(\rho \otimes \sigma_R))
\end{equation}
for all $\rho$.
Accordingly, our aim is to exclude the existence of a decomposable dilation of dynamics that are observed on systems $AB$.
In principle, the existence of dilations may depend on the dimension of the Hilbert space of the mediator
which motivates us to introduce \emph{decomposable $m$-dilation} as follows.
A map $\Lambda: AB \to AB$ has a decomposable $m$-dilation if there exists
$\tilde \Lambda: ABM \to ABM$ such that $\tilde \Lambda$ is decomposable and dimension of the mediator satisfies $d_M \le m$.
We denote the set of all maps with a decomposable $m$-dilation as $\overline{\texttt{DEC}}(m)$.

With these definitions we can state our goal precisely:
we wish to infer whether a map on $AB$ admits any decomposable $m$-dilation,
and we wish to do this via measurements of correlations only.
If there does not exist any decomposable dilation, we conclude that the interaction generating the map is non-classical.


\subsection{Detecting non-decomposability}

It turns out that one can obtain an interesting condition that witnesses non-decomposability as a simple corollary to Eq.~(\ref{EQ_ABM}).
In Appendix~\ref{APP_INACCESSIBLE} we prove that any gd-continuous correlation measure $Q$ admits the following bound under the evolution generated by $\lambda \in \overline{\texttt{DEC}}(m)$:
\begin{eqnarray}
Q_{A:B} (\rho_{t}) \le \sup_{\sigma_{XM}} Q_{X:M} (\sigma_{XM}) + I_{A:B}(\rho_{0}),
\label{EQ_AB_DET}
\end{eqnarray}
where $\rho_t = \lambda(\rho_0)$ and we emphasise that $\lambda \in \overline{\texttt{DEC}}(m)$ acts on $AB$ only.
The supremum on the right-hand side runs over all $AM$ or $BM$ states with $d_M \le m$,
and $I_{A:B}(\rho_{AB}) = \inf_{\sigma_A \otimes \sigma_B} g(d(\rho_{AB}, \sigma_A \otimes \sigma_B))$ measures the total correlations across $A:B$.
Note that if the correlation measure that we use is not gd-continuous, we can still obtain a witness of non-decomposability assuming that we start with a product state.
For example, this could be ensured without having access to $M$ by preparing the $AB$ systems in a pure product state.

As an example of using this criterion, note that maximally entangling maps we have discussed before cannot have any decomposable $m$-dilation for $m < \min(d_A, d_B)$.
A question emerges whether there exist evolutions that do not admit decomposable $m$-dilation even when the dimension of the mediator is unbounded.
This is indeed the case.
We show in Appendix~\ref{APP_SWAP} that a $\texttt{SWAP}$ operation on two objects (even two qubits) has no decomposable $m$-dilation for any $m$.
This leads to the conclusion that classical interactions cannot produce a $\texttt{SWAP}$.
The intuitive reason behind this statement is that it takes at least \emph{two} steps to implement swapping with $d_A = d_B = d_M$.
We first exchange $A$ and $M$, then we exchange $B$ and $M$, and we still must exchange $A$ and $M$ again to complete the implementation.
In fact, any $AB$ interaction can be implemented in two steps by first exchanging $A$ and $M$, applying the interaction on $BM$ and finally swapping $A$ and $M$ back.
The conclusion becomes less unexpected once we realise that $\texttt{SWAP}$ is a highly entangling operation.
For example, Alice and Bob can entangle their labs by starting with each having local Bell pairs $\ket{\psi^-_{AA'}} \otimes \ket{\psi^-_{BB'}}$ and swapping the $AB$ subsystems.

We wish to give one more insight into the structure of maps with decomposable dilations.
Clearly, the sets are nested: $\overline{\texttt{DEC}}(m) \subseteq \overline{\texttt{DEC}}(m+1)$.
In fact, the inclusions are strict as we show in Appendix~\ref{APP_STRICT}.


\subsection{Measuring non-decomposability}

In the spirit of the previous section, we would like to extend Eq.~(\ref{EQ_VIOL_AMB}) to bound the distance to $\overline{\texttt{DEC}}(m)$ based solely on correlations measured on systems $AB$.
Of course the $ABM$ operator distance to $\texttt{DEC}$ and the $AB$ operator distance to $\overline{\texttt{DEC}}(m)$ are closely related.
For contractive distances $d$ on states, we have $D(\Lambda_{ABM}, \lambda_{ABM}) \geq D (\Lambda_{AB}, \lambda_{AB})$,
which unfortunately is opposite to what we need.
To overcome this we use so-called completely bounded variant of the operator distance~\cite{Paulsen2003}:
\begin{equation}
\mathcal{D}(\Lambda_1, \Lambda_2) = \sup_{\sigma_{XY}} d( (\Lambda_1 \otimes \openone_Y)(\sigma), (\Lambda_2 \otimes \openone_Y)(\sigma) ),
\end{equation}
where $\Lambda_1, \Lambda_2: X \to X$ and $Y$ is a finite dimensional system.
The benefit of the completely bounded operator distance is that it behaves nicely on dilations.
This makes it easier to jump from the distance to $\texttt{DEC}$ to the distance to $\overline{\texttt{DEC}}(m)$.
Indeed, the completely bounded distance can be written in terms of the dilations as follows:
\begin{equation}
\mathcal{D}(\Lambda_1, \Lambda_2) = \inf_{\tilde \Lambda_i} \mathcal{D}(\tilde \Lambda_1, \tilde \Lambda_2).
\end{equation}
On one hand, for contractive distances on states, the left-hand side cannot be larger than the right-hand side.
On the other hand, the bound can be achieved by an exemplary dilation $\tilde \Lambda_i = \Lambda_i \otimes \openone$.

As a measure of non-decomposability that we will link to the violation of Eq.~(\ref{EQ_AB_DET}),
we propose the analog of the degree of non-decomposability written in terms of completely bounded distance
\begin{eqnarray}
\mathrm{NDm}(\Lambda_{AB}) = \inf_{\lambda_{AB} \in \overline{\texttt{DEC}}(m)} \mathcal{D}(\Lambda_{AB}, \lambda_{AB}).
\end{eqnarray}
With these concepts and tools, it is proven in Appendix~\ref{APP_INACCESSIBLE} that the amount of violation of Eq.~(\ref{EQ_AB_DET}) lower bounds the quantity just introduced:
\begin{eqnarray}
  \mathrm{NDm}(\Lambda_{AB}) &\ge&  g^{-1} (Q_{A:B} (\rho_t) - \mathcal{B}(\rho_{0})),
\end{eqnarray}
where $\mathcal{B}$ is the right-hand side of Eq.~(\ref{EQ_AB_DET}).
Note that all these quantities involve states and maps on $AB$ only.


\subsection{Non-classical gravity}

Our second application of these methods is in foundations.
A prime example of an inaccessible mediator is a mediating field.
The methods described above allow us to make conclusions about the field from the behaviour of objects coupled through it.
Gravitational interaction is especially interesting from this perspective as there is no direct experimental evidence of its quantum properties today.
As discussed in the introduction, observation of quantum entanglement between gravitationally-coupled masses is a plausible near-future experiment closing this gap~\cite{Aspelmeyer2022}.
In this section, we show that our methods allow a concise derivation of the non-classicality witnesses presented in the literature~\cite{Krisnanda2017,Bose2017,Marletto2017}, 
and lead to new conclusions about the interactions that can be drawn from the observation of considerable gravitational entanglement.

Assume first a completely classical situation where both states and interactions are classical.
Recall that within our framework, this means zero-discord state at all times, $D_{AB|M} = 0$ (with one and the same basis on the mediator at all times), and dynamical maps admitting decomposable dilations.
As correlation measure consider quantum entanglement, measured by relative entropy of entanglement.
Then the amount of entanglement $A:B$ that can be produced via these classical maps is
\begin{equation}
E_{A:B}(\rho_t) \le \sup_{\sigma_{XM}} E_{X:M}(\sigma_{X:M}) + I_{A:B}(\rho_0),
\label{EQ_APP_AB}
\end{equation}
where supremum is over all the states of $AM$ or $BM$ allowed in the theory, here $d_M \le m$ and $D_{AB|M} = 0$.
It is reasonable to assume that the initial state in the laboratory will be close to a product state and we therefore take $I_{A:B}(\rho_0) = 0$.
Furthermore, all states admitting $D_{AB|M} = 0$ are disentangled across $A:M$ and $B:M$ and therefore the supremum is also zero.
We therefore arrive at the conclusion that entanglement $A:B$ cannot grow, and hence observation of any gain implies non-classical states or non-classical interactions or both.

If we assume that the interactions are classical (decomposable) but the state might have non-zero discord, then entanglement still satisfies the bound in Eq.~(\ref{EQ_APP_AB}).
Therefore, observation of non-zero value of $E_{A:B}$ means that the supremum on the right-hand side is at least equal to this observed value,
i.e.\ the mediator must be capable of being entangled to $A$ or $B$, and in fact to $AB$ due to monotonicity, to at least the degree that has been measured.
Note that this is stronger than saying that the mediator needs to be discorded.

Finally, by violating the bound in Eq. (\ref{EQ_APP_AB}), it is possible to demonstrate in the laboratory that unknown interactions are not decomposable.
We stress that it is not sufficient to demonstrate that entanglement grows, we have to demonstrate that the entanglement is above a certain threshold.
This threshold depends on the dimension of the mediator and we therefore ask how high entanglement can be generated by gravity.
The answer depends on the concrete setup via which gravitational interaction is studied.
If we take two nearby harmonically trapped masses initially prepared in squeezed states with squeezing parameters $s_A$ and $s_B$,
it has been shown that the gravitational entanglement in terms of logarithmic negativity can be as large as $E_{A:B}^{\max} = |s_A + s_B| / \ln 2$, which holds for large squeezing~\cite{Krisnanda2020}.
Since in principle $s_i \to \infty$, this already shows that gravity cannot be understood as classical interaction with any finite-dimensional mediator.
More practically, the highest optical squeezing achieved today is $s_{A,B} = 1.73$~\cite{Vahlbruch2016},
and assuming it can be transferred to mechanical systems gives entanglement $E_{A:B}^{\max} \approx 5$ ebits,
which would restrict still possible decomposable dilations to use mediators with dimension $m > 2^5$.
It is rather unlikely that this amount of entanglement will be observed in near future, as the time it takes the discussed system to reach $E_{A:B}^{\max}$ in the absence of dissipation is $t_{\max} = \pi \omega L^3/4Gm$, 
independently of high squeezing, where $L$ is the separation between the masses and $\omega$ is the frequency of the trapping potential~\cite{Krisnanda2020}.
For LIGO-like parameters of masses in the order $m \sim 1$ kg, $\omega \sim 0.1$ Hz and $L \sim 1$ cm, this time is already in the order of hours and dissipation pushes it further to tens of hours.
Yet, a violation of the unit bound, and hence disproval of classical interactions via a two-level system, which would already be interesting, could be achieved within a second~\cite{Bose2017,Krisnanda2020,Rijavec2021}.

Another route would be to use gravity to execute dynamics which by other means is known to be non-decomposable.
For example, we have shown below Eq.~(\ref{EQ_AB_DET}) that maximally entangling maps do not admit decomposable dilations for $d_M \le \min(d_A,d_B)$.
The schemes in Refs.~\cite{Bose2017,Marletto2017} indeed use gravity to implement maximal entanglement, but only between two-level quantum systems encoded in path degree of freedom.
It would therefore be interesting to determine whether gravity could be used to maximally entangle masses in more paths.
Along the same line, we showed that $\texttt{SWAP}$ does not admit any decomposable dilation, even with an infinite-dimensional mediator.
Interestingly, Ref.~\cite{Lami2023} argues that gravity could implement the $\texttt{SWAP}$ gate.
In addition, the time it takes to implement the gate is twice as long as the time it takes to implement the maximally entangling unitary, showing that it is not much more demanding than entanglement-based method.
This provides an alternative witness of quantum properties of gravitational interaction that does not rely on a dimension of mediator.


\section{Conclusions}

We have proposed notions of classicality of mediated interactions (commutativity of Hamiltonians and decomposability of dynamical maps) and introduced their mathematical measures.
Our main results are inequalities in terms of any continuous correlation quantifiers with the property that their violations place lower bounds on the amount of introduced non-classicality.
These quantitative methods are therefore experiment ready and applicable in variety of physical situations due to minimalistic assumptions under which they were derived.
As examples, we showed that accurate simulations of dynamics with high correlations necessarily require a large number of Trotter steps,
and that gravitational interaction cannot be understood with the help of commuting particle-field couplings.


\section{Acknowledgements}

We are grateful to Referees for comments that allowed strengthening the conclusion about the non-classicality of gravitational interaction.
This work is supported by the Polish National Agency for Academic Exchange NAWA Project No. PPN/PPO/2018/1/00007/U/00001
and Xiamen University Malaysia Research Fund (Grant No. XMUMRF/2022-C10/IPHY/0002).
RG is supported by the National Science Centre, Poland, within the QuantERA II Programme (No 2021/03/Y/ST2/00178, acronym ExTRaQT) that has received funding from the European Union’s Horizon 2020 research and innovation programme under Grant Agreement No 101017733.
EP acknowledges funding from QuantERA/2/2020, an ERA-Net cofund in Quantum Technologies, under the project eDICT.
This work is supported by Foundation for Polish Science (FNP), IRAP project ICTQT, contract no.\ 2018/MAB/5, co-financed by EU Smart Growth Operational Programme.
M.P. is supported by European Commission via the Horizon Europe research and innovation programme ASPECTS (Grant Agreement No. 101080167).
\appendix

\section{Classicality and decomposability}
\label{APP_NOTION}

\subsection{Classical states}

For completeness, let us start with elementary relations.
A state is said to be classical (or incoherent) if it is diagonal in a preferred basis $\{\ket{m}\}$.
A multipartite state is called quantum-classical (or admits vanishing discord $D_{AB|M} = 0$) if it can be written as $\rho_{\mathrm{qc}} = \sum_m \rho_{AB|m} \otimes \Pi_m$,
where $\Pi_m = \proj{m}$ is the projector on the preferred basis and the systems are enumerated as in Fig.~\ref{FIG_SETUP}.
In words, the whole tripartite state explores only one basis in the Hilbert space of the mediator.
Let us introduce a measurement map along the preferred basis, $\Pi$, whose action on an arbitrary input state is to produce average post-measurement state: $\Pi(\rho) = \sum_m \Pi_m \rho \Pi_m$.
A state $\rho$ is qc (quantum-classical) if and only if $\rho = \Pi(\rho)$.
Alternatively, the definition of classicality can be phrased in terms of commutation with the basis elements.
\begin{proposition}\label{PROP_COMM_CL}
  Let $\Pi(X) = \sum_m \Pi_m X \Pi_m$ be a projection map, where $\Pi_m \Pi_{m'} = \delta_{mm'} \Pi_m$.
  Then
\begin{equation}
X = \Pi(X) \iff \forall m,\, [X, \Pi_m] = 0.
\end{equation}
\end{proposition}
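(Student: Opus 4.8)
The plan is to prove the two implications separately, in each case by a short direct computation that exploits the orthogonality relation $\Pi_m \Pi_{m'} = \delta_{mm'}\Pi_m$ together with the completeness $\sum_m \Pi_m = \identity$ that is implicit in projectors arising from a full basis of the mediator.

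For the forward implication $X = \Pi(X) \Rightarrow [X,\Pi_m]=0$, I would fix an index $n$ and multiply the defining identity $X = \sum_m \Pi_m X \Pi_m$ by $\Pi_n$ on the left. Orthogonality collapses the sum to the single term $m=n$, giving $\Pi_n X = \Pi_n X \Pi_n$. Multiplying the same identity by $\Pi_n$ on the right instead yields $X \Pi_n = \Pi_n X \Pi_n$. Comparing the two expressions shows $\Pi_n X = X \Pi_n$, i.e.\ $[X,\Pi_n]=0$, and since $n$ was arbitrary this holds for all $m$. Notably this direction needs only orthogonality and not completeness.

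For the reverse implication, I would assume $[X,\Pi_m]=0$ for every $m$ and substitute $\Pi_m X = X \Pi_m$ into $\Pi(X) = \sum_m \Pi_m X \Pi_m$. Using idempotency $\Pi_m^2 = \Pi_m$ this becomes $\sum_m X \Pi_m \Pi_m = \sum_m X \Pi_m$, and completeness then gives $\sum_m X \Pi_m = X\bigl(\sum_m \Pi_m\bigr) = X$, so $\Pi(X) = X$ as required.

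I expect no genuine obstacle here; the only point that warrants care is the tacit use of completeness $\sum_m \Pi_m = \identity$, which is what the reverse direction rests on and is justified because the $\Pi_m$ project onto an entire basis. If one instead wished to state the proposition for a genuinely incomplete family of mutually orthogonal projectors, the reverse implication would fail, and this assumption would have to be flagged explicitly.
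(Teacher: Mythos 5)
Your proof is correct and takes essentially the same route as the paper's: both directions reduce to multiplying $X=\Pi(X)$ (or the commutation relation) by $\Pi_m$ and invoking orthogonality, idempotency, and the completeness relation $\sum_m \Pi_m = \openone$. Your explicit remark that completeness is needed for the direction $[X,\Pi_m]=0 \Rightarrow X=\Pi(X)$ (and is tacit in the proposition as stated) is a fair and worthwhile clarification, and the paper's own proof uses it in exactly the same place.
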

\begin{proof}
The `if' direction is trivial. For the `only if' direction, consider the following argument:
\begin{eqnarray}
X \Pi_m & = & \Pi_m X, \\
X \Pi_m & = & \Pi_m X \Pi_m, \\
X & = & \sum_m \Pi_m X \Pi_m = \Pi(X),
\end{eqnarray}
where we multiplied the first equation by $\Pi_m$ from the right and used $\Pi_m^2 = \Pi_m$, and then we summed the second equation over $m$ and used the completeness relation $\sum_m \Pi_m = \openone$.
\end{proof}


\subsection{Classical interactions}

The definition of classicality of interactions in terms of commutativity is justified by the following proposition.
It shows that the Hamiltonians preserving classicality of states are invariant under dephasing in the preferred basis.
The commutativity is then a corollary.
\begin{proposition}
  Let $H$ be a time-independent Hamiltonian.
  Then $H$ is classical, i.e.\ $H = \Pi(H)$, if and only if for any classical initial state $\rho_0$, $\rho_t = e^{-itH} \rho_0 e^{itH}$ is also classical.
\end{proposition}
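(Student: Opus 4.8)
The plan is to prove both implications, using Proposition~\ref{PROP_COMM_CL} throughout to trade the classicality condition $X = \Pi(X)$ for the commutation condition $[X,\Pi_m]=0$ for all $m$. The forward (``only if'') direction is routine. Assuming $H=\Pi(H)$, Proposition~\ref{PROP_COMM_CL} gives $[H,\Pi_m]=0$ for every $m$, and hence $[e^{\pm itH},\Pi_m]=0$ because the propagator is a function of $H$. Given a classical initial state with $[\rho_0,\Pi_m]=0$, I would then slide $\Pi_m$ through the propagators,
\begin{equation*}
\Pi_m \rho_t = \Pi_m e^{-itH}\rho_0 e^{itH} = e^{-itH}\Pi_m\rho_0 e^{itH} = e^{-itH}\rho_0\Pi_m e^{itH} = \rho_t\Pi_m,
\end{equation*}
so $[\rho_t,\Pi_m]=0$ and $\rho_t$ is classical.

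The substance is the backward (``if'') direction, and the idea is to extract a constraint on $H$ by differentiating a single, well-chosen classical trajectory at $t=0$. I would take $\rho_0 = \Pi_j/\Tr(\Pi_j)$, which is a legitimate classical state for each $j$. By hypothesis $\rho_t = \Pi(\rho_t)$ for all $t$; since $\Pi$ is linear, differentiating at $t=0$ gives $\dot\rho_0 = \Pi(\dot\rho_0)$, which after dropping the harmless factor $-i$ and the normalisation reads $[H,\Pi_j] = \Pi([H,\Pi_j])$.

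A direct block computation then shows the right-hand side vanishes. Using $\Pi_j\Pi_m=\delta_{jm}\Pi_m$ one finds $\Pi(H\Pi_j)=\sum_m \Pi_m H \Pi_j \Pi_m = \Pi_j H \Pi_j$ and likewise $\Pi(\Pi_j H)=\Pi_j H \Pi_j$, so $\Pi([H,\Pi_j])=0$. Combining with the previous display gives $[H,\Pi_j]=0$, and since $j$ is arbitrary Proposition~\ref{PROP_COMM_CL} yields $H=\Pi(H)$. (Equivalently, one notes that $[H,\Pi_j]$ is purely off-diagonal in the $\{\Pi_m\}$ blocking, while membership in the range of $\Pi$ forces it to be block diagonal, leaving only the zero operator.)

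The main obstacle, and really the only point requiring care, is the differentiation step: I must justify that the tangent vector of a smooth curve lying in the closed linear subspace of $\Pi$-fixed operators again lies in that subspace. In finite dimensions this is immediate, since the difference quotients $(\rho_{t+h}-\rho_0)/h$ stay in the subspace and the subspace is closed; equivalently, the continuous linear map $\Pi$ commutes with the limit defining the derivative. I would also emphasise that testing on the single family $\rho_0=\Pi_j$ already suffices for this direction, so no appeal to the full set of classical initial states is needed to conclude.
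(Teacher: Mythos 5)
Your proof is correct and follows essentially the same route as the paper's: the key direction differentiates the classicality constraint at $t=0$, tests it on the classical states $\rho_0\propto\Pi_j$, verifies $\Pi([H,\Pi_j])=0$ by the block computation, and invokes Proposition~\ref{PROP_COMM_CL}; your converse via $[e^{\pm itH},\Pi_m]=0$ is just the commutation-language version of the paper's block-diagonal propagator $e^{\pm itH}=\sum_m e^{\pm ith_m}\otimes\Pi_m$. You additionally spell out the interchange of $\Pi$ with the derivative and the verification the paper calls ``simple to verify,'' which is fine but not a new idea.
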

\begin{proof}
The `only if' direction. Let us write the assumption explicitly:
\begin{eqnarray}
e^{-i t H} \rho_0 e^{i t H} & = & \Pi\pqty{e^{-i t H} \rho_0 e^{i t H}}, \\
e^{-i t H} [\rho_0, H] e^{i t H} & = & \Pi\pqty{e^{-i t H} [\rho_0, H] e^{i t H}},
\end{eqnarray}
where the second line is the time derivative of the first one and $\rho_0$ denotes the initial (classical) state.
By evaluating at $t=0$, we find that the commutator is invariant:
\begin{eqnarray}
[\rho_0, H] & = & \Pi\pqty{[\rho_0, H]}.
\end{eqnarray}
In particular, taking $\rho_0 = \Pi_m$ shows that for all the basis states:
\begin{eqnarray}
  [\Pi_m, H] & = & \Pi\pqty{[\Pi_m, H]} = 0,
\end{eqnarray}
where the last equation is simple to verify.
Applying Proposition~\ref{PROP_COMM_CL} proves the claim.

The `if' direction.
From the assumption, the Hamiltonian has the block form $H = \sum_m h_m \otimes \Pi_m$, where $h_m$ acts on all the systems other than the mediator.
In this case, the orthonormality of the preferred basis implies
\begin{equation}
e^{\pm i t H} = \sum_m e^{\pm i t h_m} \otimes \Pi_m.
\end{equation}
Accordingly, the initially classical mediator stays classical at all times,
and the remaining systems evolve conditionally depending on the state of the mediator.
\end{proof}
In the case of tripartite systems that we consider, where $H = H_{AM} + H_{BM}$, this shows classicality is preserved when both $H_{AM}$ and $H_{BM}$ are block diagonal with the same basis on system $M$, i.e.\ they commute.

\subsection{Simple eigenstates}
As another argument to justify our definition of classicality, we show that it constraints the eigenstates of the Hamiltonian to be fully product, at least when the local terms are non-degenerate.

\begin{proposition}
  Let $H_{AM}, H_{BM}$ be non-degenerate Hamiltonians.
  Then $[H_{AM}, H_{BM}] = 0$ implies that $H = H_{AM} + H_{BM}$ can be diagonalized with fully product states.
\end{proposition}
\begin{proof}
  Let us assume that $[H_{AM}, H_{BM}] = 0$.
  Note that when a Hermitian matrix $A$ has non-degenerate spectrum, then all eigenvectors of $A \otimes \identity$ must be of the form $\ket{\psi_A} \otimes \ket{\psi_B}$, where $\ket{\psi_A}$ is an eigenvector of $A$ and $\ket{\psi_B}$ is an arbitrary vector.
  Since $[H_{AM}, H_{BM}] = 0$ implies there is a common eigenbasis between $H_{AM}$ and $H_{BM}$, this means that there exists a common eigenbasis for $H = H_{AM} + H_{BM}$ that is product on $A:MB$ and $AM:B$ at the same time, which proves the claim.
\end{proof}


\subsection{One-way decomposability}
\label{APP_ONEWAY}

The following proposition gives an example of decomposable unitary which nevertheless cannot be generated by classical interactions.

\begin{proposition}\label{prop:decomposability-order}
There are no two-qubit unitaries $V_{AM}, V_{BM}$ such that $U_{AM} U_{BM} = V_{BM} V_{AM}$, where
  \begin{align}
    \label{eq:decomposable-nonclassical}
    U_{AM}
    &= \frac{1}{\sqrt{2}} \left( \identity + i Z_A X_M \right)
      \\
    U_{BM}
    &= \frac{1}{\sqrt{2}} \left( \identity + i Z_B Z_M \right),
  \end{align}
and $Z$ and $X$ stand for Pauli matrices.
\end{proposition}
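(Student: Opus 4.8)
The plan is to exploit the causal (one-way signaling) asymmetry between the two orderings. The operator $U = U_{AM}U_{BM}$ applies the $B$--$M$ coupling \emph{first} and the $A$--$M$ coupling \emph{last}, whereas a purported equal form $V_{BM}V_{AM}$ applies the $A$--$M$ coupling first. I would diagnose this difference in the Heisenberg picture, by tracking how an observable localised on $A$ spreads under back-conjugation and asking whether it ever acquires dependence on $B$.

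First I would establish the structural fact: for \emph{any} unitaries $V_{AM}$ acting on $AM$ and $V_{BM}$ acting on $BM$ (in particular two-qubit ones), and any observable $O_A$ supported on $A$, the operator $(V_{BM}V_{AM})^\dagger\, O_A\, (V_{BM}V_{AM})$ is supported on $AM$ only, i.e. of the form $O_{AM}\otimes \identity_B$. This is immediate because $O_A$ commutes with $V_{BM}$ (they act on disjoint systems), so $V_{BM}^\dagger O_A V_{BM} = O_A$ and the expression collapses to $V_{AM}^\dagger O_A V_{AM}$, which touches only $A$ and $M$. In words, under $V_{BM}V_{AM}$ system $B$ cannot signal to system $A$.

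Next I would show that $U = U_{AM}U_{BM}$ violates this property, i.e. that $B$ does signal to $A$. Writing $U_{AM}=\exp(i\tfrac{\pi}{4}Z_A X_M)$ and $U_{BM}=\exp(i\tfrac{\pi}{4}Z_B Z_M)$, the natural witness is $O_A = X_A$ rather than $Z_A$: since $X_A$ anticommutes with the generator $Z_A X_M$, back-conjugation by $U_{AM}$ rotates it onto the mediator, $U_{AM}^\dagger X_A U_{AM} = Y_A X_M$; then, because the resulting $X_M$ anticommutes with $Z_B Z_M$, conjugation by $U_{BM}$ rotates the mediator factor onto $B$, giving $U^\dagger X_A U = Y_A\, Z_B\, Y_M$. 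Choosing $Z_A$ instead would fail, as it commutes with $Z_A X_M$ and stays put, which illustrates why the probe must be picked to anticommute with the first coupling.

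Finally I would conclude. The Pauli strings form an orthogonal operator basis on the three qubits, so $Y_A Z_B Y_M$, carrying a nontrivial $Z_B$ factor, cannot be written as $O_{AM}\otimes\identity_B$. Hence $U^\dagger X_A U$ depends on $B$, contradicting the structural fact, and therefore $U_{AM}U_{BM}\neq V_{BM}V_{AM}$ for every choice of $V_{AM},V_{BM}$. The only genuinely delicate point --- the one I would flag as the crux --- is the choice of probe observable: a contradiction arises only for an $O_A$ whose back-conjugation actually reaches $B$, which requires anticommutation with $Z_A X_M$ at the first step and with $Z_B Z_M$ at the second; once that is arranged, the rest is routine Pauli algebra.
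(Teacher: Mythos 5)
Your proof is correct, and it takes a genuinely different route from the paper's. The paper argues by brute force: it writes $U_{AM}$ and $U_{BM}$ as controlled unitaries (controlled on $A$, resp.\ $B$, acting on $M$), expands the product $U_{AM}U_{BM}$ over the projectors $\proj{ij}_{AB}$, matches blocks against the necessarily block-diagonal form of $V_{BM}V_{AM}$, and extracts contradictory $2\times2$ identities on $M$. You instead prove a clean structural lemma --- for any $V_{AM},V_{BM}$ and any $O_A$, the back-conjugate $(V_{BM}V_{AM})^\dagger O_A (V_{BM}V_{AM}) = V_{AM}^\dagger O_A V_{AM}$ is supported on $AM$ --- and then exhibit the violation $U^\dagger X_A U = Y_A Z_B Y_M$, which carries a nontrivial $Z_B$ factor and so, by orthogonality of the Pauli basis, cannot be of the form $O_{AM}\otimes\identity_B$. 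I checked the Pauli algebra: $U_{AM}^\dagger X_A U_{AM} = X_A\,(iZ_AX_M) = Y_AX_M$ and $U_{BM}^\dagger(Y_AX_M)U_{BM} = Y_AX_M\,(iZ_BZ_M) = Y_AZ_BY_M$, and your observation that the probe must anticommute with both generators in turn is exactly the right crux. What your approach buys is generality and conceptual clarity: it is a no-signalling (causal-cone) obstruction that works in any dimension and immediately yields a reusable necessary condition for the ordering $V_{BM}V_{AM}$, whereas the paper's block comparison is tied to the specific two-qubit example. What the paper's version buys is full explicitness, and it silently avoids one small point you should be aware of but do not actually need: your final contradiction is purely an operator identity, so no claim about actual operational signalling is required.
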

\begin{proof}
	By contradiction.
	Suppose that there exist unitaries $V_{AM}, V_{BM}$ such that $U_{AM} U_{BM} = V_{BM} V_{AM}$.
	Note that we can write $U_{AM}, U_{BM}$ as
	\begin{align}
          U_{AM}
          =
          &\proj{0}_A \otimes \frac{1}{\sqrt{2}} \left( \identity_M + i X_M \right) \nonumber
          \\
          &+ \proj{1}_A \otimes \frac{1}{\sqrt{2}} \left( \identity_M - i X_M \right),
          \\
          U_{BM}
          =
          &\proj{0}_B \otimes \frac{1}{\sqrt{2}} \left( \identity_M + i Z_M \right) \nonumber
          \\
          &+ \proj{1}_B \otimes \frac{1}{\sqrt{2}} \left( \identity_M - i Z_M \right).
	\end{align}
	Therefore, the product $U_{AM} U_{BM}$ is given by 
	\begin{eqnarray}
		&& \proj{00}_{AB} \otimes \half \left( \identity + i X_M + i Y_M + i Z_M \right)
		\nonumber \\
		& + & \proj{01}_{AB} \otimes \half \left( \identity + i X_M - i Y_M - i Z_M \right)
		\nonumber \\
		&+& \proj{10}_{AB} \otimes \half \left( \identity - i X_M - i Y_M + i Z_M \right)
		\nonumber \\
		&+ & \proj{11}_{AB} \otimes \half \left( \identity - i X_M + i Y_M - i Z_M \right).
		\label{eq:decomposability-order-1}
	\end{eqnarray}
	Observe that we can always write $V_{AM} = \sum_{i,j = 0}^1 \ket{i} \bra{j}_A \otimes V_M^{A,ij}$ for some matrices $V_M^{A,ij}$, and similarly for $V_{BM}$.
	However, because we assumed $V_{BM} V_{AM} = U_{AM} U_{BM}$ and the $AB$ part in Eq.~(\ref{eq:decomposability-order-1}) is expressed solely in terms of projectors, we can express $V_{BM} V_{AM}$ as
	\begin{eqnarray}
	V_{BM} V_{AM} = \sum_{i,j} \proj{ij}_{AB} \otimes V_M^{B,jj} V_M^{A,ii},
	\label{eq:decomposability-order-2}
	\end{eqnarray}
	where each product $V_M^{B,jj} V_M^{A,ii}$ is a unitary on $M$.
	Comparing Eqs.~\eqref{eq:decomposability-order-1} and~\eqref{eq:decomposability-order-2}, we find
	\begin{align}
		V_M^{B,00} V_M^{A,00}
		&=
		\half \left( \identity + i X_M + i Y_M + i Z_M \right)
		\\
		V_M^{B,11} V_M^{A,00}
		&=
		\half \left( \identity + i X_M - i Y_M - i Z_M \right)
		\\
		V_M^{B,00} V_M^{A,11}
		&=
		\half \left( \identity - i X_M - i Y_M + i Z_M \right)
		\\
		V_M^{B,11} V_M^{A,11}
		&=
		\half \left( \identity - i X_M + i Y_M - i Z_M \right)
	\end{align}
	However, this leads to the contradiction
	\begin{align}
		\begin{pmatrix}
			0 & 1 \\
			-1 & 0
		\end{pmatrix}
		&=
		\left( V_M^{B,00} V_M^{A,00} \right)
		{\left( V_M^{B,11} V_M^{A,00} \right)}^\dagger
		\\
		&=
		\left( V_M^{B,00} V_M^{A,11} \right)
		{\left( V_M^{B,11} V_M^{A,11} \right)}^\dagger
		=
		\begin{pmatrix}
			0 & -1 \\
			1 & 0
		\end{pmatrix}, \nonumber
	\end{align}
	which completes the proof.
\end{proof}


\subsection{Classicality and commuting decompositions}
\label{APP_CL_COMM}

Here we show the relation between classicality of an interaction and decomposability of the corresponding unitary.
In particular, we show the equivalence between classicality $[H_{AM},H_{BM}] = 0$ and the existence of a continuous, commuting decomposition $U(t) = U_{AM}(t) U_{BM}(t) = U_{BM}(t) U_{AM}(t)$.

\begin{proposition}\label{PROP_DECOMP}
  A one-parameter continuous group of unitaries $U(t) = e^{-itH}$ has a commuting decomposition $U(t) = U_{BM}(t) U_{AM}(t) = U_{AM}(t) U_{BM}(t)$
  such that the map $t \mapsto \pqty{U_{AM}(t), U_{BM}(t)}$ is continuous
  if and only if there exist Hamiltonians $H_{AM}$ and $H_{BM}$ such that $H = H_{AM} + H_{BM}$ and $[H_{AM}, H_{BM}] = 0$.
\end{proposition}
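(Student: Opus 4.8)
The plan is to handle the two implications separately, with essentially all the work sitting in the converse direction.

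For the \emph{if} direction I would argue directly. Assuming $H = H_{AM} + H_{BM}$ with $[H_{AM}, H_{BM}] = 0$, the two summands exponentiate independently, so that $U = e^{-itH} = e^{-itH_{AM}} e^{-itH_{BM}} = e^{-itH_{BM}} e^{-itH_{AM}}$. Setting $U_{AM} = e^{-itH_{AM}}$ and $U_{BM} = e^{-itH_{BM}}$, these are supported on $AM$ and $BM$ respectively and commute because their generators do, so $U$ has a commuting decomposition. This step is routine.

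For the \emph{only if} direction, suppose $U = U_{AM}U_{BM} = U_{BM}U_{AM}$ with $U_{AM}$ acting on $AM$ and $U_{BM}$ on $BM$. First I would invoke the spectral theorem: two commuting unitaries, being commuting normal operators, admit a common orthonormal eigenbasis and can thus be simultaneously diagonalised. This lets me define Hermitian logarithms by functional calculus, $H_{AM} = \frac{i}{t}\log U_{AM}$ and $H_{BM} = \frac{i}{t}\log U_{BM}$, using one fixed branch of the logarithm, so that $e^{-itH_{AM}} = U_{AM}$ and $e^{-itH_{BM}} = U_{BM}$. Three facts then need checking. (i) Locality of support: since $U_{AM} = \tilde U_{AM}\otimes\identity_B$, any function of it equals $f(\tilde U_{AM})\otimes\identity_B$, so $H_{AM}$ remains supported on $AM$, and likewise $H_{BM}$ on $BM$. (ii) Commutativity: because $H_{AM}$ is a function of $U_{AM}$ and $H_{BM}$ a function of $U_{BM}$, and $[U_{AM},U_{BM}]=0$ (which also forces $[U_{AM}^\dagger,U_{BM}]=0$), the two logarithms commute; equivalently, they are simultaneously diagonal in the common eigenbasis. (iii) Reproducing $U$: since $[H_{AM},H_{BM}]=0$, one has $e^{-it(H_{AM}+H_{BM})} = e^{-itH_{AM}}e^{-itH_{BM}} = U_{AM}U_{BM} = U$, so $H := H_{AM}+H_{BM}$ is a generator of $U$ of the required commuting form.

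The main obstacle is point (ii), guaranteeing $[H_{AM},H_{BM}]=0$. Commutativity of the unitaries does \emph{not} by itself make arbitrary Hermitian logarithms commute; an unfortunate independent choice of branch on each factor could spoil it. The resolution — and the crux of the argument — is to choose both logarithms as \emph{the same function} of the respective unitary (equivalently, to diagonalise in a single common eigenbasis), so that they inherit commutativity directly from $[U_{AM},U_{BM}]=0$. I would also flag the harmless caveat that the generator $H$ produced this way need only agree with the original generator of $U$ up to the freedom in the branch of the logarithm; since the statement asserts the \emph{existence} of a commuting pair summing to a generator of $U$, this is all that is required.
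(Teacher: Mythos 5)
Your proof is correct and follows essentially the same route as the paper's: the forward direction is immediate, and the converse is obtained by taking Hermitian logarithms of the two commuting factors and recombining them. The one substantive difference lies in how the key step $[H_{AM},H_{BM}]=0$ is justified. The paper writes each logarithm as the power series $\log(\identity - X) = -\sum_{n}X^n/n$ with $X=\identity-U$ and commutes the series term by term; you instead invoke simultaneous diagonalisation of the commuting normal operators and define the logarithms by functional calculus with a fixed branch. Your version is in fact the more robust one: the paper's series converges only when every eigenvalue $e^{i\theta}$ of the unitary satisfies $|1-e^{i\theta}|\le 1$ (i.e.\ all eigenphases lie within $\pi/3$ of zero), so as written it does not cover a general unitary, whereas the observation that any functional-calculus logarithm of $U_{AM}$ is a polynomial in $U_{AM}$ (Lagrange interpolation on the spectrum) and therefore commutes with everything commuting with $U_{AM}$ works unconditionally --- and, as you note, rules out only the genuinely pathological logarithms that assign different values to equal eigenvalues. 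Your closing caveat is also apt: the construction recovers \emph{some} commuting generator of $U$ rather than necessarily the given $H$, a point the paper glosses over by differentiating $e^{-itH}=e^{-it(H_{AM}+H_{BM})}$ at $t=0$ even though that identity has only been established at the single fixed time.
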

\begin{proof}
  Using the Baker--Campbell--Haussdorf (BCH) formula~\cite{Hall_2015}, one easily sees that if such $H_{AM}, H_{BM}$ exists, then $U(t) = e^{-i t H_{AM}} e^{-i t H_{BM}} = e^{-i t H_{BM}} e^{-i t H_{AM}}$, showing that the unitary has a continuous commuting decomposition.

  To show the other direction, suppose the unitary $e^{-i t H}$ has a continuous commuting decomposition.
  Now, let us take $t$ small enough such that $\norm{U_{AM}(t) - \identity}_{\infty}, \norm{U_{BM}(t) - \identity}_{\infty} < 1$.
  This ensures that $H_{AM} = i \log{U_{AM}(t)}/t, H_{BM} = i \log{U_{BM}(t)}/t$ can be defined through the power series for matrix logarithm.
  Using the series representation $\log{\left(\identity - X\right)} = - \sum_{n=1}^{\infty} \frac{1}{n} X^n$, we notice that these interaction Hamiltonians must commute
  \begin{align}
    [H_{AM}, H_{BM}]
    &=
      - \frac{1}{t^2} [ \log{U_{AM}}, \log{U_{BM}} ] \nonumber
    \\
    &=
      - \frac{1}{t^2} \left[
      \sum_{n = 1}^\infty \frac{{\left( \identity - U_{AM} \right)}^n}{n},
      \sum_{m = 1}^\infty \frac{{\left( \identity - U_{BM} \right)}^m}{m}
      \right] \nonumber
    \\
    &=
      0.
  \end{align}
  Using the BCH formula, we obtain
  \begin{align}
    e^{- i t H}
    = e^{- it H_{AM}} e^{- it H_{BM}}
    = e^{- it \left( H_{AM} + H_{BM} \right)}.
  \end{align}
  Differentiating the expression above with respect to $t$ and using the identity $\left. \frac{d}{dt} e^{(tA)} \right|_{t=0} = A$ shows that $H = H_{AM} + H_{BM}$, which proves the claim.
\end{proof}


\subsection{Consistency}
\label{APP_CONSISTENCY}

Let us start with recalling the two definitions of decomposability given in the main text.

\begin{definition}[unitary]\label{def:unitary-decomposable}
  Let $U$ be a unitary acting on a tripartite system $\mathcal{H}_A \otimes \mathcal{H}_B \otimes \mathcal{H}_M$.
  $U$ is decomposable if there exist unitaries $U_{AM}, U_{BM}$ such that
  \begin{align}
    U_{ABM} = U_{BM} U_{AM}.
  \end{align}
\end{definition}

\begin{definition}[map]\label{def:map-decomposable}

  Let $\lambda$ be a map acting on a tripartite system $\mathcal{H}_A \otimes \mathcal{H}_B \otimes \mathcal{H}_M$.
  $\lambda$ is decomposable if there exist maps $\lambda_{AM}, \lambda_{BM}$ such that
  \begin{align}
    \lambda (\rho) = \lambda_{BM} \lambda_{AM} (\rho).
  \end{align}
\end{definition}

The following proposition shows that these two definitions are consistent.
\begin{proposition}
  A unitary $U$ is decomposable if and only if the map $\lambda(\rho) = U \rho U^\dagger$ is decomposable.
\end{proposition}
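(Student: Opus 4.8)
The plan is to prove the two implications separately, with the forward direction being essentially bookkeeping and the reverse direction resting on the rigidity of unitary (reversible) channels. Throughout I read the maps $\Lambda_{AM}, \Lambda_{BM}$ in the decomposition as quantum channels (CPTP maps), which is the natural physical reading of Definition~\ref{def:map-decomposable}.

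For the ``only if'' direction, suppose $U = U_{BM} U_{AM}$ is a decomposable unitary. I would simply set $\Lambda_{AM}(\rho) = U_{AM} \rho U_{AM}^\dagger$ and $\Lambda_{BM}(\rho) = U_{BM} \rho U_{BM}^\dagger$, which are legitimate channels acting on $AM$ and $BM$ respectively. Then $\Lambda_{BM} \Lambda_{AM}(\rho) = U_{BM} U_{AM} \rho U_{AM}^\dagger U_{BM}^\dagger = U \rho U^\dagger = \Lambda(\rho)$, so the unitary map $\Lambda$ is decomposable.

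For the ``if'' direction, assume $\Lambda(\rho) = U \rho U^\dagger$ factorises as $\Lambda = \Lambda_{BM} \Lambda_{AM}$, where $\Lambda_{AM} = \Phi_{AM} \otimes \mathrm{id}_B$ and $\Lambda_{BM} = \mathrm{id}_A \otimes \Psi_{BM}$. The key observation is that $\Lambda$ is a unitary channel and hence reversible: its inverse $\Lambda^{-1}(\rho) = U^\dagger \rho U$ is again a channel. Composing, $\Lambda^{-1} \Lambda_{BM} \Lambda_{AM} = \mathrm{id}$, so $\Lambda_{AM}$ possesses a channel left-inverse. I would then invoke the standard rigidity result that a channel admitting a channel one-sided inverse must be a unitary channel, giving $\Lambda_{AM}(\rho) = V \rho V^\dagger$ for some unitary $V$ on $ABM$. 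Exploiting the product structure $\Lambda_{AM} = \Phi_{AM} \otimes \mathrm{id}_B$, the factor $\Phi_{AM}$ is then forced to be unitary as well (a tensor product of channels is reversible only if each factor is), so $V = U_{AM} \otimes \openone_B$ up to a phase with $U_{AM}$ unitary on $AM$. Consequently $\Lambda_{BM} = \Lambda\, \Lambda_{AM}^{-1}$ is a composition of unitary channels, hence unitary of the form acting as $U_{BM}$ on $BM$. Finally, from $U \rho U^\dagger = (U_{BM} U_{AM}) \rho (U_{BM} U_{AM})^\dagger$ holding for all $\rho$ I conclude $U = e^{i\phi} U_{BM} U_{AM}$, and absorbing the global phase into $U_{BM}$ yields exactly the decomposable form of Definition~\ref{def:unitary-decomposable}.

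The main obstacle is the rigidity step: proving that a channel with a channel (one-sided) inverse is necessarily a unitary channel. Everything else --- the forward direction, the extraction of the tensor-factor unitary, and the removal of the global phase --- is routine. I would either cite the reversibility theorem directly or include a short argument via the Knill--Laflamme conditions: writing $\Lambda_{AM}$ in Kraus form $\{E_k\}$, reversibility forces $E_j^\dagger E_k = \lambda_{jk} \openone$, and diagonalising the positive matrix $\lambda$ shows that every Kraus operator is proportional to one common unitary, so $\Lambda_{AM}$ is indeed a unitary channel.
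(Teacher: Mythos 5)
Your proof is correct and follows essentially the same route as the paper's: the forward direction by conjugation with the factor unitaries, and the reverse direction by observing that $\Lambda_{AM}$ admits a CPTP one-sided inverse (namely $\sigma \mapsto U^\dagger \Lambda_{BM}(\sigma) U$) and invoking the standard rigidity result that such a channel must be unitary, after which $\Lambda_{BM}$ is unitary as well. Your additional care with the tensor-factor structure and the global phase only fills in details the paper leaves implicit; the key lemma and overall strategy coincide.
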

\begin{proof}
  If $U$ is decomposable, choosing $\lambda_{AM} (\rho) = U_{AM} \rho U_{AM}^\dagger$ and $\lambda_{BM} (\rho) = U_{BM} \rho U_{BM}^\dagger$ shows that $\lambda$ is also decomposable.

  To show the other implication, suppose there exists two maps $\lambda_{AM}, \lambda_{BM}$ such that $U \rho U^\dagger = \lambda_{BM} \lambda_{AM} (\rho)$.
  It is enough to show that we can choose the maps $\lambda_{AM}, \lambda_{BM}$ to be unitaries.
  This is indeed possible  by the following argument:
  Since $U \rho U^\dagger = \lambda_{BM} \lambda_{AM} (\rho)$, we see that $\sigma \mapsto U^\dagger \lambda_{BM} (\sigma) U$ is a CPTP-inverse of $\lambda_{AM}$.
  Since the only CPTP-maps that have a CPTP-inverse are unitaries~\cite{Nayak_2007}, we conclude that $\lambda_{AM}$ must be a unitary map.
  The fact that $\lambda_{BM}$ is also unitary follows from $\lambda_{BM} (\rho) = U \lambda_{AM}^\dagger (\rho) U^\dagger$.
\end{proof}

Another question regarding the consistency between the two definitions concerns unitary dilations: is decomposability of a map equivalent to the existence of a decomposable unitary dilation?
This would be desirable since this would imply any decomposable map is generated by some `classical' interaction on a larger system.
Here we show that the implication holds in at least in one direction.
\begin{proposition}
	Let $\lambda$ be a decomposable map.
	Then there exists a Stinespring dilation of $\lambda$
	\begin{align}
		\lambda(\rho_{ABM})
		&=
		\Tr_R U_{ABMR} (\rho_{ABM} \otimes \sigma_R) U_{ABMR}^\dagger,
	\end{align}
	such that $U_{ABMR}$ is decomposable.
\end{proposition}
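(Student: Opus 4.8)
The plan is to dilate the two component maps separately and then compose their dilating unitaries, bundling both ancillas into the mediator. Since $\Lambda$ is decomposable we may write $\Lambda = (\mathrm{id}_A \otimes \Lambda_{BM}) \circ (\Lambda_{AM} \otimes \mathrm{id}_B)$, where $\Lambda_{AM}$ is CPTP on $AM$ and $\Lambda_{BM}$ is CPTP on $BM$. By Stinespring's theorem each admits a unitary dilation with the ancilla in a fixed product state: $\Lambda_{AM}(\cdot) = \Tr_{R_1} U_{AMR_1} (\cdot \otimes \sigma_{R_1}) U_{AMR_1}^\dagger$ and $\Lambda_{BM}(\cdot) = \Tr_{R_2} U_{BMR_2} (\cdot \otimes \sigma_{R_2}) U_{BMR_2}^\dagger$, with $U_{AMR_1}$ acting nontrivially only on $A$, $M$, $R_1$ and $U_{BMR_2}$ only on $B$, $M$, $R_2$.

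Next I would set $R = R_1 R_2$, $\sigma_R = \sigma_{R_1} \otimes \sigma_{R_2}$, and define the candidate dilating unitary $U_{ABMR} = U_{BMR_2} U_{AMR_1}$, where each factor is extended by the identity on the systems it does not touch. The point is that, viewing $MR$ as the enlarged mediator, $U_{AMR_1}$ acts as the identity on $B$ and therefore has the form $U_{A(MR)}$, while $U_{BMR_2}$ acts as the identity on $A$ and has the form $U_{B(MR)}$. Hence $U_{ABMR}$ is decomposable in exactly the sense of Definition~\ref{def:unitary-decomposable}, now with the mediator taken to be $MR$.

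It then remains to verify that this $U_{ABMR}$ is a genuine Stinespring dilation of $\Lambda$. I would compute $\Tr_R U_{ABMR} (\rho_{ABM} \otimes \sigma_R) U_{ABMR}^\dagger$ by tracing out $R_1$ and $R_2$ one at a time. Because $U_{BMR_2}$ acts trivially on $R_1$, the partial trace over $R_1$ commutes with conjugation by $U_{BMR_2}$, so applying $\Tr_{R_1}$ first reproduces $(\Lambda_{AM} \otimes \mathrm{id}_B)(\rho_{ABM})$ inside the remaining $U_{BMR_2}$ channel; tracing out $R_2$ then yields $(\mathrm{id}_A \otimes \Lambda_{BM})$ applied to that, i.e. exactly $\Lambda(\rho_{ABM})$. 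I expect the main obstacle to be precisely this factorisation of the iterated partial trace: one must be careful that each dilating unitary is extended by the identity not only on the opposite probe ($B$, resp. $A$) but also on the opposite ancilla ($R_2$, resp. $R_1$), so that the two traces commute with the `wrong' unitary and peel off cleanly. A secondary point to state explicitly is that decomposability is taken to provide CPTP component maps $\Lambda_{AM}, \Lambda_{BM}$, which is what licenses the use of Stinespring dilations with a fixed ancilla state in the first step.
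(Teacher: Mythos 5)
Your proposal is correct and follows essentially the same route as the paper: dilate $\Lambda_{AM}$ and $\Lambda_{BM}$ separately via Stinespring, set $R = R_1 R_2$, $\sigma_R = \sigma_{R_1} \otimes \sigma_{R_2}$, and $U_{ABMR} = U_{BMR_2} U_{AMR_1}$. Your explicit verification that the two partial traces peel off cleanly, and your remark that decomposability of $U_{ABMR}$ is with respect to the enlarged mediator $MR$, are details the paper leaves implicit, but the argument is the same.
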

\begin{proof}
	Since $\lambda$ is decomposable, there exist maps $\lambda_{AM}, \lambda_{BM}$ such that $\lambda = \lambda_{BM} \lambda_{AM}$.
	Let us denote a Stinespring dilation of $\lambda_{AM}$ as
	\begin{align}
		\lambda_{AM} (\rho_{AM}) = \Tr_{R_A} U_{AMR_A} (\rho_{AM} \otimes \sigma_{R_A}) U_{AMR_A}^\dagger,
	\end{align}
	where $R_A$ is the purifying system for $\lambda_{AM}$.
	Similarly, $\lambda_{BM}$ must have a dilation with purifying system $R_B$.
        We prove the claim by identifying $R = R_A R_B$, $U_{ABMR} = U_{BMR_B} U_{AMR_A}$, and $\sigma_R = \sigma_{R_A} \otimes \sigma_{R_B}$.
\end{proof}

\subsection{Out-of-time-ordered correlator}
\label{APP_OTOC}

Finally, we comment on the notion of out-of-time-ordered correlator (OTOC) and its relation to the decomposability.
OTOC is often used to study the spread of correlations in a many-body system~\cite{Swingle_2018,Larkin_1969}.
Given two observables $V, W$ (usually chosen to be commuting at time $t=0$), the OTOC is defined as
\begin{align}
  C(t)
  &=
    - \Tr \pqty{ \rho_{\beta} \pqty{ \bqty{V, W(t)} }^2},
\end{align}
where $\rho_{\beta}$ is the thermal state at inverse temperature $\beta$, and $W(t) = e^{-iHt} W e^{iHt}$.
Intuitively, it measures the effect of time evolution on the commutator between two initially commuting observables.
We show that OTOC witnesses the non-decomposability, providing an alternative to our methods.
In particular, let us choose $V$ as an observable on system $A$, and $W$ on system $B$.
Let us assume that the dynamics is decomposable, i.e.\ for any $t$, there exist $U_{AM}, U_{BM}$ such that $e^{-iHt} = U_{BM} U_{AM}$.
Noticing that $[W, U_{AM}] = 0$, an explicit calculation shows that
\begin{align}
  \bqty{V, W(t)}
  &=
    \bqty{V, U_{BM} W U_{BM}^\dagger}
  \\
  &=
    U_{BM} \bqty{V, W} U_{BM}^\dagger,
\end{align}
which is zero, since $V$ and $W$ act on different subsystems.
Therefore the measurement of a non-zero OTOC can witness the non-decomposability of the dynamics.
It remains to be shown whether such an approach can be extended to quantify the degree of non-decomposability.


\section{Accessible mediator}
\label{APP_ACCESSIBLE}

The following proposition proves the `correlation capacity' bound when the initial state is product $\rho_0 = \rho_{AM} \otimes \rho_B$.
\begin{proposition}
\label{TH_CAPACITY}
Let $\lambda$ be a decomposable map. Any correlation measure satisfies:
\begin{equation}
Q_{A:MB}(\lambda(\rho_{AM} \otimes \rho_B)) \le \sup_{\sigma_{AM}} Q_{A:M}(\sigma_{AM}).
\end{equation}
\end{proposition}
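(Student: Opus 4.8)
The plan is to exploit the decomposition $\Lambda = \Lambda_{BM}\Lambda_{AM}$ together with the product structure of the initial state, and to track correlations across the $A:M$ cut through the two stages of the evolution. The key observation is that the map $\Lambda_{BM}$ acts only on systems $B$ and $M$, so from the perspective of the bipartition $A\,:\,MB$ it is a \emph{local} operation on the $MB$ side. Since $Q$ is monotonic under local operations, applying $\Lambda_{BM}$ cannot increase the $A:MB$ correlations. This is the mechanism that will let me discard the second stage of the dynamics.

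The steps, in order, are as follows. First I would write $\Lambda(\rho_{AM}\otimes\rho_B) = \Lambda_{BM}\big(\Lambda_{AM}(\rho_{AM}\otimes\rho_B)\big)$ and note that $\Lambda_{AM}$ acts trivially on $B$, so the intermediate state is $\tau_{AM}\otimes\rho_B$ where $\tau_{AM}=\Lambda_{AM}(\rho_{AM})$. Second, I would use the monotonicity of $Q$ under the local channel $\Lambda_{BM}$ (local with respect to the $A:MB$ split) to obtain
\begin{equation}
Q_{A:MB}\big(\Lambda(\rho_{AM}\otimes\rho_B)\big) \le Q_{A:MB}(\tau_{AM}\otimes\rho_B).
\end{equation}
Third, because $\rho_B$ is in product with $\tau_{AM}$ and appending an uncorrelated system on the $MB$ side is itself a local operation (and its inverse, discarding $B$, is local too), the correlation measure is insensitive to the tensored factor $\rho_B$, giving $Q_{A:MB}(\tau_{AM}\otimes\rho_B)=Q_{A:M}(\tau_{AM})$. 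Finally, since $\tau_{AM}$ is one particular $AM$ state, it is bounded above by the supremum over all $AM$ states, yielding the claim
\begin{equation}
Q_{A:MB}\big(\Lambda(\rho_{AM}\otimes\rho_B)\big)\le Q_{A:M}(\tau_{AM})\le \sup_{\sigma_{AM}}Q_{A:M}(\sigma_{AM}).
\end{equation}

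The main obstacle is making precise exactly which monotonicity and invariance properties of $Q$ are being invoked, and checking they are genuinely implied by ``monotonic under local operations.'' In particular I must ensure that (i) $\Lambda_{BM}$ truly counts as a local operation for the $A:MB$ partition — it is, since $M$ and $B$ sit on the same side of the cut — and (ii) that tensoring with, and discarding, the uncorrelated factor $\rho_B$ leaves $Q$ unchanged, which follows because both adding a fixed ancilla and tracing it out are local channels, so monotonicity applied in both directions forces equality. No continuity or gd-continuity assumption is needed here; that machinery is only required later when the initial state is allowed to be non-product. The argument is therefore short, and the care lies entirely in the bookkeeping of which operations are local relative to the $A:MB$ bipartition.
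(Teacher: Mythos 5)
Your proposal is correct and follows essentially the same route as the paper's proof: drop $\Lambda_{BM}$ by monotonicity across the $A:MB$ cut, remove the uncorrelated factor $\rho_B$ by invariance under appending/discarding local ancillas, and bound the result by the supremum over $AM$ states. No meaningful differences.
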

\begin{proof}
By assumption $\lambda = \lambda_{BM} \lambda_{AM}$. The bound follows solely from monotonicity of correlations under local operations:
\begin{eqnarray}
Q_{A:MB}(\lambda(\rho_{AM} \otimes \rho_B)) & = & Q_{A:MB}(\lambda_{BM} \lambda_{AM}(\rho_{AM} \otimes \rho_B)) \nonumber \\
& \le & Q_{A:MB}(\lambda_{AM}(\rho_{AM} \otimes \rho_B)).
\end{eqnarray}
Since $Q$ is monotonic under local operations, it must be invariant under invertible local operations.
In particular, adding or discarding an uncorrelated system does not change the value of $Q$.
In our case, system $B$ is completely uncorrelated and therefore $Q_{A:MB}(\lambda_{AM}(\rho_{AM} \otimes \rho_B)) = Q_{A:M}(\lambda_{AM}(\rho_{AM}))$.
Of course, the last quantity is upper bounded by the supremum over all states.
\end{proof}

For a general initial state, we have the following bound by continuity.
\begin{proposition}
\label{TH_CONTINUOUS}
Let $\lambda$ be a decomposable map and $\rho$ any tripartite quantum state. Any gd-continuous correlation measure satisfies:
\begin{equation}
Q_{A:MB}(\lambda(\rho)) \le \sup_{\sigma_{AM}} Q_{A:M}(\sigma_{AM}) + I_{AM:B}(\rho),
\end{equation}
where $I_{AM:B}(\rho) = \inf_{\sigma_{AM}\otimes \sigma_B} g(d(\rho, \sigma_{AM}\otimes \sigma_B))$ is a measure of total correlations in the state $\rho$ across the partition $AM:B$.
\end{proposition}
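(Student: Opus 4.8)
The plan is to reduce the general bound to the product-state bound of Proposition~\ref{TH_CAPACITY} through a continuity argument, comparing $\Lambda(\rho)$ with the image under $\Lambda$ of a product state across the $AM:B$ partition. First I would fix an arbitrary product state $\sigma_{AM} \otimes \sigma_B$ and apply gd-continuity of $Q$ to the two outputs $\Lambda(\rho)$ and $\Lambda(\sigma_{AM} \otimes \sigma_B)$, giving
\begin{equation}
Q_{A:MB}(\Lambda(\rho)) \le Q_{A:MB}(\Lambda(\sigma_{AM} \otimes \sigma_B)) + g\big(d(\Lambda(\rho), \Lambda(\sigma_{AM} \otimes \sigma_B))\big).
\end{equation}

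Next I would control the two terms on the right separately. For the distance, contractivity of $d$ under the CPTP map $\Lambda$ yields $d(\Lambda(\rho), \Lambda(\sigma_{AM} \otimes \sigma_B)) \le d(\rho, \sigma_{AM} \otimes \sigma_B)$, and the monotonicity of $g$ preserves this after applying $g$. For the correlation term, the key point is that $\sigma_{AM} \otimes \sigma_B$ is a product state across $AM:B$, so Proposition~\ref{TH_CAPACITY} applies directly and bounds $Q_{A:MB}(\Lambda(\sigma_{AM} \otimes \sigma_B))$ by $\sup_{\tau_{AM}} Q_{A:M}(\tau_{AM})$. Combining the two observations gives, for every product state,
\begin{equation}
Q_{A:MB}(\Lambda(\rho)) \le \sup_{\tau_{AM}} Q_{A:M}(\tau_{AM}) + g\big(d(\rho, \sigma_{AM} \otimes \sigma_B)\big).
\end{equation}

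Finally, since the left-hand side and the supremum term are independent of the chosen product state, I would take the infimum over all $\sigma_{AM} \otimes \sigma_B$ on the right-hand side; the distance term collapses to $\inf_{\sigma_{AM} \otimes \sigma_B} g(d(\rho, \sigma_{AM} \otimes \sigma_B)) = I_{AM:B}(\rho)$, which is exactly the claimed bound. I expect no serious obstacle here: the conceptual content sits entirely in Proposition~\ref{TH_CAPACITY}, and the remaining work is purely the continuity estimate that lets the product-state result propagate to an arbitrary input at the cost of the initial correlation content $I_{AM:B}(\rho)$. The only points needing care are using gd-continuity in its one-sided form (retaining the sign that lets us discard the absolute value) and invoking contractivity with respect to the full map $\Lambda$ rather than one of its decomposable factors.
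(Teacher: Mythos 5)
Your proposal is correct and follows essentially the same route as the paper: gd-continuity to compare $\Lambda(\rho)$ with $\Lambda(\sigma_{AM}\otimes\sigma_B)$, contractivity of $d$ under $\Lambda$ together with monotonicity of $g$, Proposition~\ref{TH_CAPACITY} for the product-state term, and an optimisation over product states to recover $I_{AM:B}(\rho)$. The only (cosmetic) difference is that you take the infimum over all product states at the end, whereas the paper picks the product state achieving the infimum; both yield the stated bound.
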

\begin{proof}
We bound the difference in correlations between arbitrary state and the product state using gd-continuity
\begin{eqnarray}
&&|Q_{A:MB}(\lambda(\rho)) - Q_{A:MB}(\lambda(\sigma_{AM} \otimes \sigma_B))| \nonumber \\
&& \le g(d(\lambda(\rho), \lambda(\sigma_{AM} \otimes \sigma_B))) \nonumber \\
&& \le g(d(\rho, \sigma_{AM} \otimes \sigma_B)),
\end{eqnarray}
where in the last line we used the fact that $g$ is monotonic and $d$ contractive.
The derived inequality holds for any $\sigma_{AM} \otimes \sigma_B$, in particular for the one achieving the infimum of $I_{AM:B} (\rho)$, leading to
\begin{align}
Q_{A:MB}(\lambda(\rho)) \le Q_{A:MB}(\lambda(\sigma_{AM} \otimes \sigma_B)) + I_{AM:B}(\rho).
\end{align}
In the last step we use Proposition~\ref{TH_CAPACITY} to bound the first term on the right.
\end{proof}

In order to simplify the notation, let us denote the bound on correlations due to decomposable dynamics as $B(\rho) = \sup_{\sigma_{AM}} Q_{A:M}(\sigma_{AM}) + I_{AM:B}(\rho)$, and the state at time $t$ as $\rho_t = \Lambda (\rho_0)$.
\begin{proposition}
\label{TH_QUANT}
The degree of non-decomposability $\mathrm{ND}(\Lambda)$ is lower bounded as follows:
\begin{eqnarray}
\mathrm{ND}(\Lambda) & \ge & g^{-1} ( Q_{A:MB}(\rho_t) - B(\rho_0) )
\end{eqnarray}
\begin{proof}
We will prove the theorem by combining the continuity bounds with the statement of Proposition~\ref{TH_CONTINUOUS}.
Consider a fixed, but arbitrary, decomposable map $\lambda$.
Due to gd-continuity we write
\begin{eqnarray}
&& Q_{A:MB}(\rho_t) - Q_{A:MB} (\lambda(\rho_0)) \nonumber \\
&& \le | Q_{A:MB}(\rho_t) - Q_{A:MB} (\lambda(\rho_0))| \nonumber \\
&& \le g(d(\rho_t,\lambda(\rho_0))) \label{EQ_TH_Q1}
\end{eqnarray}
We rearrange and use the bound in Proposition~\ref{TH_CONTINUOUS}:
\begin{eqnarray}
Q_{A:MB}(\rho_t) & \le & Q_{A:MB} (\lambda(\rho_0)) + g(d(\rho_t,\lambda(\rho_0))) \nonumber \\
& \le & B(\rho_0) + g(d(\rho_t,\lambda(\rho_0))). \label{EQ_TH_Q2}
\end{eqnarray}
The amount of violation is now brought to the left-hand side, and below we use the fact that $g$ is invertible and take supremum over states $\rho_0$ to identify the degree of non-decomposability
\begin{eqnarray}
  Q_{A:MB}(\rho_t) - B(\rho_0)& \le & g(d(\rho_t,\lambda(\rho_0))) \nonumber \\
  g^{-1} ( Q_{A:MB}(\rho_t) - B(\rho_0) ) & \le & d(\rho_t,\lambda(\rho_0)), \nonumber \\
  g^{-1} ( Q_{A:MB}(\rho_t) - B(\rho_0) ) & \le & \mathrm{ND}(\Lambda).
\end{eqnarray}
which proves the claim.
\end{proof}
\end{proposition}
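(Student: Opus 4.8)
The plan is to compare the correlations generated by the true dynamics $\Lambda$ against those generated by an arbitrary decomposable map, and to convert the gap between the two into a statement about operator distance. First I would fix an arbitrary $\lambda \in \texttt{DEC}$ and feed both maps the same initial state $\rho_0$, producing $\rho_t = \Lambda(\rho_0)$ and $\lambda(\rho_0)$. Since $Q$ is gd-continuous, the one-sided difference of the two correlation values is controlled by the state distance between the two outputs,
\begin{equation}
Q_{A:MB}(\rho_t) - Q_{A:MB}(\lambda(\rho_0)) \le g\big(d(\rho_t, \lambda(\rho_0))\big).
\end{equation}

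Next I would invoke Proposition~\ref{TH_CONTINUOUS}, which caps the correlations achievable by any decomposable map at $B(\rho_0)$, so that $Q_{A:MB}(\lambda(\rho_0)) \le B(\rho_0)$. Substituting this and rearranging isolates the amount of violation,
\begin{equation}
Q_{A:MB}(\rho_t) - B(\rho_0) \le g\big(d(\rho_t, \lambda(\rho_0))\big).
\end{equation}
Because $g$ is invertible and monotonically non-decreasing, applying $g^{-1}$ preserves the inequality and yields $g^{-1}\big(Q_{A:MB}(\rho_t) - B(\rho_0)\big) \le d(\rho_t, \lambda(\rho_0))$.

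The final step connects this single-state distance to the operator distance and then eliminates the dependence on the chosen decomposable map. Since $\rho_0$ is merely one admissible input in the supremum defining $D$, I have $d(\rho_t, \lambda(\rho_0)) = d(\Lambda(\rho_0), \lambda(\rho_0)) \le \sup_\sigma d(\Lambda(\sigma), \lambda(\sigma)) = D(\Lambda, \lambda)$. The resulting bound $g^{-1}\big(Q_{A:MB}(\rho_t) - B(\rho_0)\big) \le D(\Lambda, \lambda)$ holds for every $\lambda \in \texttt{DEC}$, whereas its left-hand side carries no dependence on $\lambda$; taking the infimum over all decomposable maps on the right therefore gives $g^{-1}\big(Q_{A:MB}(\rho_t) - B(\rho_0)\big) \le \inf_{\lambda \in \texttt{DEC}} D(\Lambda, \lambda) = \mathrm{ND}(\Lambda)$, which is the claim.

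I do not anticipate a genuine obstacle, since every ingredient is already available: gd-continuity supplies the first inequality, Proposition~\ref{TH_CONTINUOUS} supplies the decomposability cap, and the definitions of $D$ and $\mathrm{ND}$ govern the passage from a single evaluation at $\rho_0$ to the operator distance and from a fixed map to the infimum. The one point requiring care is the bookkeeping of quantifiers---deriving the bound for a \emph{fixed but arbitrary} $\lambda$ so that the infimum can be taken legitimately, and recognising that the supremum in $D$ is precisely what absorbs the evaluation at $\rho_0$. I would also flag the implicit assumption that the violation is nonnegative, i.e. $Q_{A:MB}(\rho_t) \ge B(\rho_0)$, so that $g^{-1}$ is applied within its meaningful range; when the bound is not violated the statement is vacuous.
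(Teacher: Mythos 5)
Your proposal is correct and follows essentially the same route as the paper's proof: gd-continuity to bound the correlation gap by $g(d(\rho_t,\lambda(\rho_0)))$, Proposition~\ref{TH_CONTINUOUS} to cap $Q_{A:MB}(\lambda(\rho_0))$ by $B(\rho_0)$, inversion of $g$, and then passage from the single-state distance to $D(\Lambda,\lambda)$ and the infimum over $\lambda\in\texttt{DEC}$. If anything, your final step is spelled out more carefully than the paper's (which loosely says ``take supremum over states $\rho_0$''), and your remark that the bound is vacuous when $Q_{A:MB}(\rho_t)\le B(\rho_0)$ is a sensible caveat the paper leaves implicit.
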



\subsection{Spectral norm}
\label{APP_INFTY}

We link the operator norm of unitary maps with the spectral distance between them.
\begin{lemma}
  Let $U, V$ be unitaries.
  Then $D_{\infty} (U, V) \leq 2 \norm{U - V}_{\infty}$.
\end{lemma}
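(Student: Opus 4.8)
The plan is to reduce the operator distance to a pointwise estimate on individual states and then apply a standard telescoping argument. By definition, $D_\infty(U,V) = \sup_\sigma \norm{U\sigma U^\dagger - V\sigma V^\dagger}_\infty$, where the supremum runs over density matrices $\sigma$ in the domain and $U,V$ are identified with the unitary channels $\sigma \mapsto U\sigma U^\dagger$ and $\sigma \mapsto V\sigma V^\dagger$. It therefore suffices to prove the bound $\norm{U\sigma U^\dagger - V\sigma V^\dagger}_\infty \le 2\norm{U-V}_\infty$ for each fixed state $\sigma$, since the right-hand side is uniform in $\sigma$ and the supremum can be taken at the very end.

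The first step is to insert the cross term $U\sigma V^\dagger$ to split the difference into two pieces each involving a single factor of $U-V$:
\begin{equation}
U\sigma U^\dagger - V\sigma V^\dagger = U\sigma(U^\dagger - V^\dagger) + (U-V)\sigma V^\dagger.
\end{equation}
Applying the triangle inequality followed by submultiplicativity of the spectral norm bounds each summand by a product of three norms. I would then use unitary invariance of the spectral norm, which gives $\norm{U}_\infty = \norm{V^\dagger}_\infty = 1$, together with the elementary fact that $\norm{\sigma}_\infty \le 1$ for any density matrix (its largest eigenvalue is at most one since the eigenvalues are nonnegative and sum to one), and the identity $\norm{U^\dagger - V^\dagger}_\infty = \norm{U-V}_\infty$. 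Each of the two terms is then bounded by $\norm{U-V}_\infty$, and adding them produces the factor of two.

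Taking the supremum over all states $\sigma$ yields the claimed inequality $D_\infty(U,V) \le 2\norm{U-V}_\infty$. I do not expect any genuine obstacle here: the only ingredients are the triangle inequality, submultiplicativity, unitary invariance, and the bound $\norm{\sigma}_\infty \le 1$. The single point requiring a small idea is the choice of cross term in the telescoping step, and even the resulting constant $2$ is not optimised — it is simply the bound that is convenient for linking to the Trotter estimate in Eq.~(\ref{EQ_ND_INF}).
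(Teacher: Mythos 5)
Your proof is correct and follows essentially the same route as the paper's: both split $U\sigma U^\dagger - V\sigma V^\dagger$ into terms each carrying one factor of $U-V$, then apply the triangle inequality, submultiplicativity, and $\norm{\sigma}_\infty \le 1$. The only (immaterial) difference is that you use the asymmetric telescoping $U\sigma(U^\dagger-V^\dagger)+(U-V)\sigma V^\dagger$ whereas the paper uses the symmetric splitting $\half(U-V)\sigma(U+V)^\dagger+\half(U+V)\sigma(U-V)^\dagger$; both yield the same constant $2$.
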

\begin{proof}
  By simple algebra, we verify
  \begin{align}
    U \rho U^\dagger
    - V \rho V^\dagger
    =&
       \half (U-V) \rho (U+V)^\dagger
    \\
     &+ \half (U+V) \rho (U-V)^\dagger,
  \end{align}
  where $\rho$ is a density matrix.
  Taking the spectral norm on both sides, we get
  \begin{align}
    &\norm{U \rho U^\dagger - V \rho V^\dagger}_{\infty}
    \\
    &=
      \norm{
      \half (U-V) \rho (U+V)^\dagger
      + \half (U+V) \rho (U-V)^\dagger
      }_{\infty}
    \\
    &\leq
      \half \norm{(U-V) \rho (U+V)^\dagger}_{\infty}
      + \half \norm{(U+V) \rho (U-V)^\dagger}_{\infty}
    \\
    &\leq
      \norm{U-V}_{\infty}
      \norm{\rho}_{\infty}
      \norm{U+V}_{\infty}
    \\
    &\leq
      2 \norm{U-V}_{\infty}
  \end{align}
  where we used triangle inequality and submultiplicativity of spectral norm $\norm{AB}_{\infty} \leq \norm{A}_{\infty} \norm{B}_{\infty}$.
  Using the bounds $\norm{\rho}_{\infty} \leq \norm{\rho}_1 = 1$ and $\norm{U+V}_{\infty} \leq \norm{U}_{\infty} + \norm{V}_{\infty} = 2$ on the last inequality finishes the proof.
\end{proof}

\subsection{Correlations and the number of Trotter steps}
\label{APP_I_TROTTER}

As a concrete illustration, let us relate the mutual information in a state to number of Trotter steps needed.
In this case, we can use continuity bounds for von Neumann entropy to conclude that if $\frac{1}{2} \norm{\rho - \sigma}_1 = \epsilon$, then~\cite{Audenaert_2007,Winter_2016}
\begin{align}
  \abs{I_{A:MB} (\rho) - I_{A:MB} (\sigma)}
  \leq 2 \epsilon \log \pqty{d_A d_M d_B - 1} + 3 \eta(\epsilon),
\end{align}
where $\eta(x) = -x \log{x} - \pqty{1-x} \log\pqty{1-x}$ is the binary entropy.
We now bound the first term using $\epsilon \leq \sqrt{\epsilon}$, recalling that $\epsilon$ is small, and the second term using $\eta(\epsilon) \leq \sqrt{\epsilon}$ to arrive at
\begin{align}
  \abs{I_{A:MB} (\rho) - I_{A:MB} (\sigma)}
  \leq 5 \log \pqty{d_A d_M d_B} \sqrt{\epsilon}.
\end{align}
Furthermore, since $\norm{X}_1 \leq \rank{X} \cdot \norm{X}_{\infty}$, we have
\begin{align}
  \abs{I_{A:MB} (\rho) - I_{A:MB} (\sigma)}
  \leq C \sqrt{\norm{\rho - \sigma}_{\infty}},
\end{align}
where $C = 5\sqrt{2} \log \pqty{d_A d_M d_B} \sqrt{d_A d_M d_B}$ is a dimension-dependent constant.
This means that we can choose $g(s) = C \sqrt{s}$ to show that mutual information is $gd$-continuous with respect to the spectral distance, and the inverse is $g^{-1} (s) = \pqty{s/C}^2$ when $s \geq 0$.
Combining this with the discussion in Section~\ref{SEC_TROTTER} and Proposition~\ref{TH_QUANT}, we finally get
\begin{align}
  \pqty{\frac
  { I_{A:MB} \pqty{ e^{-itH} \rho_0 e^{itH} } - B\pqty{\rho_0} }
  {C}
  }^2
  \leq t^2 \norm{ [H_{AM}, H_{BM}] }_{\infty},
\end{align}
when $I_{A:MB} \pqty{ e^{-itH} \rho_0 e^{itH} } \geq B\pqty{\rho_0}$.
This means the number of Trotter steps needed to guarantee an $\epsilon$ error is
\begin{align}
  r \geq O\pqty{
  \frac
  { \pqty{I_{A:MB} \pqty{ e^{-itH} \rho_0 e^{itH} } - B\pqty{\rho_0}}^2 }
  {\epsilon}
  }.
\end{align}
Note that while we used some relaxations to derive this bound, we still obtain non-trivial quantitative statements relating the correlations in the system and the commutator norm.
In particular, while quadratic power in the mutual information is sub-optimal, a linear bound cannot exists due to the tightness of the entropic continuity bounds.


\section{Inaccessible mediator}
\label{APP_INACCESSIBLE}

First, we derive a necessary condition on maps admitting a decomposable $m$-dilation.
\begin{proposition}
\label{PR_AB_DET}
A gd-continuous correlation measure $Q$ admits the following bound under the evolution generated by $\lambda \in \overline{\texttt{DEC}}(m)$:
\begin{eqnarray}
Q_{A:B} (\lambda(\rho_{AB})) \le \sup_{\sigma_{AM}} Q_{A:M} (\sigma_{AM}) + I_{A:B}(\rho_{AB}),
\end{eqnarray}
where the supremum is over all $AM$ states with the dimension $d_M \le m$,
and $I_{A:B}(\rho_{AB}) = \inf_{\sigma_A \otimes \sigma_B} g(d(\rho_{AB}, \sigma_A \otimes \sigma_B))$ measures total correlations across $A:B$.
\end{proposition}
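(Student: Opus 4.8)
The plan is to obtain this as a corollary of the accessible-mediator bound in Eq.~(\ref{EQ_ABM}) (Proposition~\ref{TH_CONTINUOUS}), applied to a dilation of $\lambda$ and then traced back down to $AB$. First I would unpack the hypothesis $\lambda \in \overline{\texttt{DEC}}(m)$: by definition there is an ancillary mediator state $\sigma_M$ with $d_M \le m$ and a decomposable map $\tilde\lambda = \Lambda_{BM}\Lambda_{AM}$ on $ABM$ such that $\lambda(\rho_{AB}) = \Tr_M(\tilde\lambda(\rho_{AB}\otimes\sigma_M))$ for every $\rho_{AB}$. The whole argument then amounts to feeding the \emph{tripartite} initial state $\rho_0 = \rho_{AB}\otimes\sigma_M$ into the bound we already possess and discarding $M$ at the end.

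Concretely, the key steps in order are the following. First, apply Proposition~\ref{TH_CONTINUOUS} to the decomposable map $\tilde\lambda$ and the initial state $\rho_0 = \rho_{AB}\otimes\sigma_M$, obtaining
\begin{equation*}
Q_{A:MB}(\tilde\lambda(\rho_0)) \le \sup_{\sigma_{AM}} Q_{A:M}(\sigma_{AM}) + I_{AM:B}(\rho_0),
\end{equation*}
where the supremum runs over $AM$ states of mediator dimension $d_M \le m$, inherited from the dilation. Second, discard the mediator: since $\lambda(\rho_{AB}) = \Tr_M(\tilde\lambda(\rho_0))$ and tracing out $M$ is a local operation acting on the $MB$ side of the $A{:}MB$ cut, monotonicity of $Q$ under local operations gives
\begin{equation*}
Q_{A:B}(\lambda(\rho_{AB})) = Q_{A:B}(\Tr_M \tilde\lambda(\rho_0)) \le Q_{A:MB}(\tilde\lambda(\rho_0)).
\end{equation*}
Third, reduce the correlation term by showing $I_{AM:B}(\rho_{AB}\otimes\sigma_M) \le I_{A:B}(\rho_{AB})$. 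Chaining these three facts yields the claimed inequality.

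The only step needing genuine care is the third, and I expect it to be the main (though still modest) obstacle. The quantity $I_{AM:B}$ optimises over \emph{all} product states $\tau_{AM}\otimes\tau_B$, so I cannot directly restrict to a product of the $A{:}B$ optimiser with $\sigma_M$. Instead I would take the optimiser $\sigma_A\otimes\sigma_B$ for $I_{A:B}(\rho_{AB})$ and use the particular admissible trial state $(\sigma_A\otimes\sigma_M)\otimes\sigma_B$, which has the required form $\tau_{AM}\otimes\tau_B$. Then contractivity of $d$ under both appending $\sigma_M$ (a CPTP preparation) and discarding it (partial trace) forces the equality $d(\rho_{AB}\otimes\sigma_M,\, \sigma_A\otimes\sigma_B\otimes\sigma_M) = d(\rho_{AB},\, \sigma_A\otimes\sigma_B)$; monotonicity of $g$ then gives $I_{AM:B}(\rho_{AB}\otimes\sigma_M) \le g(d(\rho_{AB},\sigma_A\otimes\sigma_B))$, and taking the infimum over $\sigma_A\otimes\sigma_B$ delivers the bound. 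A final bookkeeping point is that the capacity term in Proposition~\ref{TH_CONTINUOUS} carries the mediator dimension of $\tilde\lambda$, so restricting to $d_M \le m$ exactly reproduces the supremum appearing in the statement.
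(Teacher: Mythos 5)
Your proposal is correct and follows essentially the same route as the paper's own proof: monotonicity of $Q$ under discarding the mediator gives $Q_{A:B}(\lambda(\rho_{AB})) \le Q_{A:MB}(\tilde\lambda(\rho_{AB}\otimes\sigma_M))$, Proposition~\ref{TH_CONTINUOUS} applied to the decomposable dilation supplies the capacity bound with the $d_M \le m$ restriction, and the total-correlation term reduces because appending and removing the uncorrelated $\sigma_M$ are mutually inverse contractive operations. Your explicit two-sided contractivity argument for $I_{AM:B}(\rho_{AB}\otimes\sigma_M) \le I_{A:B}(\rho_{AB})$ simply spells out what the paper dispatches in one line as ``tracing out an uncorrelated particle is a reversible process,'' so there is no substantive difference.
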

\begin{proof}
Consider the following argument:
\begin{eqnarray}
Q_{A:B}(\lambda(\rho_{AB})) & \le & Q_{A:MB}(\tilde \lambda(\rho_{AB} \otimes \sigma_M)) \nonumber \\
& \le & \sup_{\sigma_{AM}} Q_{A:M} (\sigma_{AM}) + I_{AM:B}(\rho_{AB} \otimes \sigma_M) \nonumber \\
& = & \sup_{\sigma_{AM}} Q_{A:M} (\sigma_{AM}) + I_{A:B}(\rho_{AB}),
\end{eqnarray}
where the first line follows from the monotonicity of $Q$ and the existence of a decomposable $m$-dilation,
the second line restates Proposition~\ref{TH_CONTINUOUS} restricted to $m$-dimensional mediator,
and the last line follows from the fact that tracing out an uncorrelated particle is a reversible process and hence equality.
\end{proof}

Note that we have assumed that any map with a decomposable dilation starts with the joint $ABM$ state of a product form $\rho_{AB} \otimes \sigma_M$.
Although this is a restrictive condition, it has been shown that this is essentially the only consistent choice if we require that the dynamics can start from any $AB$ state and the assignment is linear~\cite{Pechukas_1994}.

Next, we show that the violation of the inequality provides a bound on the degree of non-decomposability.

\begin{proposition}
The degree of non-decomposability satisfies the following lower bound:
\begin{eqnarray}
\mathrm{NDm}(\Lambda_{AB}) & \ge & g^{-1} (Q_{A:B} (\Lambda_{AB} (\rho_{AB})) - \mathcal{B}(\rho_{AB})) \nonumber
\end{eqnarray}
where $\mathcal{B}$ is the two-particle version of the bound $B$:
\begin{equation*}
\mathcal{B}(\rho_{AB}) = \sup_{\sigma_{AM}} Q_{A:M}(\sigma_{AM}) + I_{A:B}(\rho_{AB}),
\end{equation*}
and supremum over $\sigma_{AM}$ assumes the dimension of mediator satisfies $d_M \le m$.
\end{proposition}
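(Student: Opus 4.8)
The plan is to transcribe the argument of Proposition~\ref{TH_QUANT} into the two-particle, inaccessible-mediator setting, making three substitutions: the tripartite bound of Proposition~\ref{TH_CONTINUOUS} is replaced by its $AB$ counterpart from Proposition~\ref{PR_AB_DET}, the ordinary operator distance is replaced by the completely bounded distance $\mathcal{D}$, and $\mathrm{ND}$ is replaced by $\mathrm{NDm}$. Throughout I would fix the initial state $\rho_{AB}$, write $\rho_t = \Lambda_{AB}(\rho_{AB})$, and recall that by definition $\mathrm{NDm}(\Lambda_{AB}) = \inf_{\lambda_{AB}\in\overline{\texttt{DEC}}(m)} \mathcal{D}(\Lambda_{AB},\lambda_{AB})$.

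First I would fix an arbitrary map $\lambda_{AB}\in\overline{\texttt{DEC}}(m)$ and apply gd-continuity of $Q$ to the two output states $\Lambda_{AB}(\rho_{AB})$ and $\lambda_{AB}(\rho_{AB})$, both of which live on $AB$. This yields
\[
Q_{A:B}(\rho_t) - Q_{A:B}(\lambda_{AB}(\rho_{AB})) \le g\bigl(d(\Lambda_{AB}(\rho_{AB}),\lambda_{AB}(\rho_{AB}))\bigr).
\]
I would then bound the state distance by the completely bounded distance. Since a trivial ancilla $Y$ reduces $\mathcal{D}$ to the ordinary operator distance, which in turn dominates the distance evaluated at the single input $\rho_{AB}$, we have $d(\Lambda_{AB}(\rho_{AB}),\lambda_{AB}(\rho_{AB})) \le \mathcal{D}(\Lambda_{AB},\lambda_{AB})$, and monotonicity of $g$ turns the right-hand side into $g\bigl(\mathcal{D}(\Lambda_{AB},\lambda_{AB})\bigr)$. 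The key input is Proposition~\ref{PR_AB_DET}, which gives $Q_{A:B}(\lambda_{AB}(\rho_{AB})) \le \mathcal{B}(\rho_{AB})$ for \emph{every} $\lambda_{AB}\in\overline{\texttt{DEC}}(m)$, with $\mathcal{B}(\rho_{AB})$ independent of the particular decomposable map; substituting this gives
\[
Q_{A:B}(\rho_t) - \mathcal{B}(\rho_{AB}) \le g\bigl(\mathcal{D}(\Lambda_{AB},\lambda_{AB})\bigr).
\]

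To finish, I would apply $g^{-1}$, which is monotonically non-decreasing, to move the violation to the left, obtaining $g^{-1}\bigl(Q_{A:B}(\rho_t) - \mathcal{B}(\rho_{AB})\bigr) \le \mathcal{D}(\Lambda_{AB},\lambda_{AB})$. Because the left-hand side does not depend on $\lambda_{AB}$, I would take the infimum over all $\lambda_{AB}\in\overline{\texttt{DEC}}(m)$ on the right to recognise $\inf_{\lambda_{AB}\in\overline{\texttt{DEC}}(m)} \mathcal{D}(\Lambda_{AB},\lambda_{AB}) = \mathrm{NDm}(\Lambda_{AB})$, which is the claim.

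The only place that demands care — and hence the main obstacle — is the passage from the state-level distance to the completely bounded distance $\mathcal{D}$: one must confirm that $\mathcal{D}$ upper-bounds the distance between the two outputs at input $\rho_{AB}$ (immediate from the trivial-ancilla reduction noted above) so that its appearance under $g$ is legitimate, and that the bound $\mathcal{B}(\rho_{AB})$ is uniform over $\overline{\texttt{DEC}}(m)$ so that the infimum over $\lambda_{AB}$ can be taken with the left-hand side held fixed. Notably, the full dilation machinery of $\mathcal{D}$ is used here only in the mild form $d\le\mathcal{D}$; its deeper role in relating the $AB$ and $ABM$ distances is not needed for this particular lower bound, and the remainder of the argument is a direct transcription of Proposition~\ref{TH_QUANT}.
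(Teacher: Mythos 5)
Your proof is correct, but it takes a genuinely more direct route than the paper's. The paper lifts everything to the tripartite level: it fixes dilations $\tilde\Lambda$ of $\Lambda_{AB}$ and $\tilde\lambda$ of $\lambda_{AB}$, reruns the argument of Proposition~\ref{TH_QUANT} on the dilated dynamics to bound the violation by $d\bigl(\tilde\Lambda(\rho_{AB}\otimes\sigma_M),\tilde\lambda(\rho_{AB}\otimes\sigma_M)\bigr)$, then chains $d \le D(\tilde\Lambda,\tilde\lambda)\le \mathcal{D}(\tilde\Lambda,\tilde\lambda)$ and finally descends back to $AB$ via the dilation characterisation $\inf_{\tilde\Lambda,\tilde\lambda}\mathcal{D}(\tilde\Lambda,\tilde\lambda)=\mathcal{D}(\Lambda_{AB},\lambda_{AB})$. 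You instead stay on $AB$ throughout: gd-continuity is applied to the two bipartite outputs, Proposition~\ref{PR_AB_DET} is invoked as a black box to supply the uniform bound $Q_{A:B}(\lambda_{AB}(\rho_{AB}))\le\mathcal{B}(\rho_{AB})$, and the only property of $\mathcal{D}$ used is the trivial-ancilla reduction $d(\Lambda_{AB}(\rho_{AB}),\lambda_{AB}(\rho_{AB}))\le D(\Lambda_{AB},\lambda_{AB})\le\mathcal{D}(\Lambda_{AB},\lambda_{AB})$. Your version is shorter, pushes all the dilation bookkeeping into the already-proved Proposition~\ref{PR_AB_DET}, and sidesteps a quantifier subtlety in the paper's write-up (the two dilations are implicitly assumed to share the same ancilla state $\sigma_M$). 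What the paper's route buys is conceptual: it exhibits explicitly why the completely bounded distance, rather than the ordinary operator distance, is the natural quantity here --- namely its good behaviour under dilations --- whereas in your argument $\mathcal{D}$ appears only because it dominates $D$, and the same proof would go through verbatim with $\mathrm{NDm}$ defined via $D$ instead, which would in fact yield a formally stronger statement. Your observation to this effect in the final paragraph is accurate.
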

\begin{proof}
Consider a fixed but arbitrary dilation $\tilde \Lambda$ of the map $\Lambda_{AB}$ and a decomposable map $\tilde \lambda$ (acting on all subsystems) that is a dilation of the map $\lambda_{AB} \in \overline{\texttt{DEC}}(m)$.
The same steps as in Proposition~\ref{TH_QUANT}, Eqs. (\ref{EQ_TH_Q1}) and (\ref{EQ_TH_Q2}), lead to the following inequality
\begin{eqnarray}
&& g^{-1} (Q_{A:B}(\Lambda(\rho_{AB})) - \mathcal{B}(\rho_{AB})) ) \nonumber \\
&\le & d(\tilde \Lambda(\rho_{AB} \otimes \sigma_M), \tilde \lambda(\rho_{AB} \otimes \sigma_M))
\end{eqnarray}
where we have used monotonicity and the definition of dilation to write $Q_{A:B}(\Lambda(\rho_{AB})) \le Q_{A:MB}(\tilde \Lambda(\rho_{AB} \otimes \sigma_M))$
and invariance of total correlations under tracing out uncorrelated system in the bound $B$, which therefore becames $\mathcal{B}$.
The left-hand side is accordingly fully expressed in terms of bipartite quantities and we now similarly bound the right-hand side.

To show the claim, it is enough to show that the distance on the right-hand side gives a lower bound to the degree of non-decomposability.
By taking the supremum over $\rho_{AB}$, the right hand side is upper bounded by the operator distance:
\begin{eqnarray}
\sup_{\rho_{AB}} d(\tilde \Lambda(\rho_{AB} \otimes \sigma_M), \tilde \lambda(\rho_{AB} \otimes \sigma_M)) \le D(\tilde \Lambda, \tilde \lambda),
\end{eqnarray}
where the inequality is due to the optimisation over states of $AB$ only, not over all three systems.
Analogous reasons show that the operator distance is upper bounded by the completely bounded distance
\begin{eqnarray}
D(\tilde \Lambda, \tilde \lambda) \le \mathcal{D} (\tilde \Lambda, \tilde \lambda).
\end{eqnarray}
This time because the right-hand side involves additional optimisation over the ancillary states.
Finally, note that this reasoning holds for any dilation and the best bound is obtained by taking the dilations producing the infimum:
$\inf_{\lambda_{AB} \in \overline{\texttt{DEC}}(m)} \inf_{\tilde \Lambda, \tilde \lambda} \mathcal{D} (\tilde \Lambda, \tilde \lambda) = \inf_{\lambda_{AB} \in \overline{\texttt{DEC}}(m)} \mathcal{D}(\Lambda_{AB}, \lambda_{AB})$.
\end{proof}

With these tools, we investigate in the next sections the structure of maps that admit decomposable $m$-dilations.


\subsection{Non-decomposability of swapping}
\label{APP_SWAP}

\begin{proposition}\label{prop:swap-nondecomposable}
	The map $\texttt{SWAP}$ on two qubits has no decomposable $m$-dilation, for any $m$.
\end{proposition}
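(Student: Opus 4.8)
The plan is to argue by contradiction, exploiting the rigid temporal structure of a decomposable map. Suppose \texttt{SWAP} admitted a decomposable $m$-dilation for some $m$. By definition this means there is an ancilla state $\sigma_M$ (with $d_M \le m$) and a decomposable channel $\tilde\Lambda = \Lambda_{BM}\Lambda_{AM}$ on $ABM$ such that
\begin{equation*}
\texttt{SWAP}(\rho_{AB}) = \Tr_M\!\left[\Lambda_{BM}\Lambda_{AM}(\rho_{AB}\otimes\sigma_M)\right]
\end{equation*}
for every input state $\rho_{AB}$, where $\Lambda_{AM}$ acts as identity on $B$ and $\Lambda_{BM}$ acts as identity on $A$.

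First I would isolate the reduced state on $A$ of the output. The crucial observation is that in the decomposition $\Lambda_{BM}\Lambda_{AM}$ the second map never touches system $A$, and being a channel it is trace-preserving on $BM$. Hence tracing out $BM$ after $\Lambda_{BM}$ gives the same result as tracing out $BM$ before it, so that
\begin{equation*}
\Tr_{BM}\!\left[\Lambda_{BM}\Lambda_{AM}(\rho_{AB}\otimes\sigma_M)\right] = \Tr_{BM}\!\left[\Lambda_{AM}(\rho_{AB}\otimes\sigma_M)\right] = \Tr_M\!\left[\Lambda_{AM}(\rho_A\otimes\sigma_M)\right],
\end{equation*}
where in the last equality I used that $\Lambda_{AM}$ acts trivially on $B$, so $B$ can be traced out first, leaving only the reduced input $\rho_A = \Tr_B \rho_{AB}$. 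This shows the output $A$-marginal is the image of $\rho_A$ under a fixed channel $\mathcal{E}_A(\,\cdot\,) = \Tr_M[\Lambda_{AM}(\,\cdot\,\otimes\sigma_M)]$, and therefore depends on the input only through $\rho_A$.

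The contradiction then comes from the defining action of the swap: the $A$-marginal of $\texttt{SWAP}(\rho_{AB})$ equals the input $B$-marginal $\rho_B = \Tr_A \rho_{AB}$. Combining with the previous paragraph gives $\rho_B = \mathcal{E}_A(\rho_A)$ for all inputs, which is impossible because the right-hand side is a function of $\rho_A$ alone. To make this fully explicit I would feed in two product states $\rho_A\otimes\rho_B$ and $\rho_A\otimes\rho_B'$ sharing the same $\rho_A$ but with $\rho_B\neq\rho_B'$ (such states exist already for qubits): the two outputs must have identical $A$-marginals $\mathcal{E}_A(\rho_A)$, yet \texttt{SWAP} forces them to be $\rho_B$ and $\rho_B'$, a contradiction. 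Since nothing in the argument constrained $m$, the conclusion holds for every $m$.

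I do not anticipate a genuine obstacle here; the entire content is the single structural remark that in the order $\Lambda_{BM}\Lambda_{AM}$ system $A$ is \emph{finalised} before $B$ is coupled a second time, so its marginal cannot carry information originating in $B$. The only points that need care are (i) justifying the commutation of $\Tr_{BM}$ with the trace-preserving $\Lambda_{BM}$, and (ii) noting that the same reasoning applies verbatim, with the roles of $A$ and $B$ exchanged, if one instead adopts the reversed decomposition order $\Lambda_{AM}\Lambda_{BM}$, so that neither convention admits a dilation.
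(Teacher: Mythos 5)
Your proposal is correct and follows essentially the same route as the paper: both arguments hinge on the observation that, since $\Lambda_{BM}$ is trace-preserving and does not act on $A$, the output $A$-marginal of $\Lambda_{BM}\Lambda_{AM}(\rho_{AB}\otimes\sigma_M)$ depends only on $\rho_A$, while \texttt{SWAP} forces it to equal $\rho_B$. The paper simply instantiates your general argument with the concrete pair $\proj{00}_{AB}$ and $\proj{01}_{AB}$, which share the $A$-marginal $\proj{0}_A$ but have distinct $B$-marginals.
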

\begin{proof}
	We will prove this by contradiction.
	Suppose that $\texttt{SWAP}$ has a decomposable $m$-dilation.
        Let us compare the action of $\texttt{SWAP}$ on $\ket{00}_{AB}$ and on $\ket{01}_{AB}$.
	By definition, there exists two maps $\lambda_{AM}, \lambda_{BM}$ and some initial state $\sigma_M$ such that
	\begin{align}
	  \proj{00}_{AB}
	  &= \texttt{SWAP} (\proj{00}_{AB}) \nonumber
          \\
          &= \Tr_{M} \lambda_{BM} \lambda_{AM} \left( \proj{00}_{AB} \otimes \sigma_M \right),\label{eq:swap-contradiction-1}
	  \\
	  \proj{10}_{AB}
	  &= \texttt{SWAP} (\proj{01}_{AB}) \nonumber
          \\
          &= \Tr_{M} \lambda_{BM} \lambda_{AM} \left( \proj{01}_{AB} \otimes \sigma_M \right).\label{eq:swap-contradiction-2}
	\end{align}
	Let us define $\sigma^0_{AM} = \lambda_{AM} \left( \proj{0}_A \otimes \sigma_M \right)$.
	By Eqs.~\eqref{eq:swap-contradiction-1} and~\eqref{eq:swap-contradiction-2}, we have
	\begin{align}
          \proj{0}_A
          &= \Tr_{B} \texttt{SWAP} (\proj{00}_{AB}) \nonumber
          \\
          &= \Tr_{BM} \lambda_{BM} \left( \proj{0}_B \otimes \sigma^0_{AM} \right),
            \label{eq:swap-contradiction-3}
          \\
          \proj{1}_A
          &= \Tr_{B} \texttt{SWAP} (\proj{01}_{AB}) \nonumber
          \\
          &= \Tr_{BM} \lambda_{BM} \left( \proj{1}_B \otimes \sigma^0_{AM} \right).
            \label{eq:swap-contradiction-4}
	\end{align}
	But because $\lambda_{BM}$ is trace preserving and $\Tr_B$ factors out when applied to product states, we have
	\begin{align}
	  &\Tr_{BM} \lambda_{BM} \left( \proj{0}_B \otimes \sigma^0_{AM} \right)
          \nonumber \\
	  &=
		\Tr_{BM} \left( \proj{0}_B \otimes \sigma^0_{AM} \right)
	  \nonumber \\
	  &=
		\Tr_{M} \sigma^0_{AM}
	  \nonumber \\
	  &=
		\Tr_{BM} \lambda_{BM} \left( \proj{1}_B \otimes \sigma^0_{AM} \right).
	\end{align}
	Combining this with Eqs.~\eqref{eq:swap-contradiction-3} and~\eqref{eq:swap-contradiction-4}, we obtain
	\begin{align}
	  \proj{0}_A
	  &=
		\Tr_{BM} \lambda_{BM} \left( \proj{0}_B \otimes \sigma^0_{AM} \right)
		\\
	  &=
		\Tr_{BM} \lambda_{BM} \left( \proj{1}_B \otimes \sigma^0_{AM} \right)
	  \\
	  &=
		\proj{1}_A,
	\end{align}
	which is clearly a contradiction.
\end{proof}


\subsection{Strict inclusions}
\label{APP_STRICT}

\begin{proposition}
  The inclusion $\overline{\texttt{DEC}}(m) \subsetneq \overline{\texttt{DEC}}(m+1)$ is strict for all $m$.
\end{proposition}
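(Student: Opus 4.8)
The plan is to prove strictness by exhibiting a single map $\Lambda$ on $AB$ that lies in $\overline{\texttt{DEC}}(m+1)$ but not in $\overline{\texttt{DEC}}(m)$; combined with the nesting $\overline{\texttt{DEC}}(m) \subseteq \overline{\texttt{DEC}}(m+1)$ already noted in the main text, this gives the strict inclusion. The separating tool is the detection bound of Proposition~\ref{PR_AB_DET}: since the relative entropy of entanglement $E$ is gd-continuous and satisfies $\sup_{\sigma_{AM}} E_{A:M}(\sigma_{AM}) = \log d_M$ for $d_M \le m$, any $\lambda \in \overline{\texttt{DEC}}(m)$ acting on a product initial state (so that $I_{A:B}(\rho_0) = 0$) can generate at most $\log m$ of $A$--$B$ entanglement. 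Hence any map that produces strictly more than $\log m$ ebits of relative-entropy entanglement from a product input cannot admit a decomposable $m$-dilation.

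It therefore suffices to build a witness that uses a mediator of dimension exactly $m+1$ and yet produces $\log(m+1) > \log m$ of entanglement. First I would set $d_A = d_B = d_M = m+1$ and initialise the mediator in $\sigma_M = \proj{0}_M$. I would then construct an explicit decomposable dilation $\tilde\Lambda = \Lambda_{BM}\Lambda_{AM}$ from two unitary maps: $U_{AM}$ generates a maximally entangled pair on $AM$, i.e.\ $\ket{0}_A\ket{0}_M \mapsto \tfrac{1}{\sqrt{m+1}}\sum_i \ket{i}_A\ket{i}_M$, and $U_{BM} = \texttt{SWAP}_{BM}$ transfers the mediator's share onto $B$. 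Feeding the product input $\proj{00}_{AB}\otimes\sigma_M$ through $\Lambda_{AM}$ and then $\Lambda_{BM}$ yields $\tfrac{1}{\sqrt{m+1}}\sum_i \ket{i}_A\ket{i}_B$ with the mediator returned to $\ket{0}_M$; tracing out $M$ leaves the maximally entangled state of Schmidt rank $m+1$ on $AB$. Since the dilation defines a legitimate CPTP map $\Lambda$ on all of $AB$, this certifies $\Lambda \in \overline{\texttt{DEC}}(m+1)$.

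The separation is then immediate: the output has $E_{A:B} = \log(m+1)$, which exceeds the bound $\log m$ available to any map in $\overline{\texttt{DEC}}(m)$, so $\Lambda \notin \overline{\texttt{DEC}}(m)$ and the inclusion is strict for every $m$. I would also verify the minor hypotheses needed to invoke Proposition~\ref{PR_AB_DET}, namely that $E$ is gd-continuous and that $\sup_{d_M \le m} E_{A:M}(\sigma_{AM}) = \log m$ when $d_A = m+1 > m$, the latter because the relative entropy of entanglement is bounded by the logarithm of the smaller local dimension.

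The step I expect to be the crux is the construction of the dilation, not the counting argument. The delicate point is that the two allowed interaction steps must be ordered as $AM$ followed by $BM$, and a naive coupling --- entangling $M$ with $A$ and then directly entangling $M$ with $B$ --- produces $M$--$B$ entanglement that vanishes on $AB$ after the partial trace. The \texttt{SWAP} trick circumvents this: although a global swap between $A$ and $B$ is itself non-decomposable, $\texttt{SWAP}_{BM}$ is a perfectly legitimate $B$--$M$ local unitary, and it is precisely what ferries the entanglement that $U_{AM}$ deposited on the mediator over to $B$ while disentangling $M$, so that the correlations survive the trace over the mediator.
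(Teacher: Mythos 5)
Your proof is correct and follows essentially the same route as the paper's: both construct the separating map by maximally entangling $A$ with the mediator, applying $\texttt{SWAP}_{BM}$, tracing out $M$, and then invoking the detection bound with the relative entropy of entanglement and its $\log d_M$ correlation capacity (the paper phrases it as a map in $\overline{\texttt{DEC}}(m)\setminus\overline{\texttt{DEC}}(m-1)$, which is the same statement up to an index shift). Your closing remark about why the $\texttt{SWAP}_{BM}$ step is the crux is accurate and matches the intuition given in the paper's discussion of non-decomposability of swapping.
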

\begin{proof}
  Let us fix $m$ and take $d_A = d_B > d_M = m$.
  Let $\lambda_m (\rho_{AB}) = \Tr_M \texttt{SWAP}_{BM} \lambda_{AM} (\rho_{AB} \otimes \proj{0}_M)$, where $\lambda_{AM}$ is a maximally entangling map.
  By this construction, $\lambda_m$ has a decomposable $m$-dilation, i.e.\ $\lambda_m \in \overline{\texttt{DEC}}(m)$.
  Choosing $\rho_{AB} = \proj{00}_{AB}$ and $Q$ to be relative entropy of entanglement, we obtain $E_{A:B} (\lambda_m(\rho_{AB})) = \log{m}$,
  whereas by Proposition~\ref{PR_AB_DET}, for all maps $\lambda \in \overline{\texttt{DEC}}(m-1)$ we have (recall that $\rho_{AB}$ is product)
  \begin{align}
    E_{A:B} (\lambda (\rho_{AB}))
    &\leq \sup_{\sigma_{AM}} E_{A:M} (\sigma_{AM}) + I_{A:B} (\rho_{AB})
    \\
    &= \log{(m-1)}.
  \end{align}
  Therefore $\lambda_m \not\in \overline{\texttt{DEC}}(m-1)$, and the claim is shown.
\end{proof}

\bibliography{refs.bib}

\end{document}